\documentclass[letterpaper, 10 pt, conference]{ieeeconf}  

\IEEEoverridecommandlockouts                              
\overrideIEEEmargins

\usepackage{graphics} 
\usepackage{graphicx}
\usepackage{booktabs} 
\usepackage{amsmath}
\usepackage{amssymb, amsthm}
\usepackage{caption, subcaption}
\usepackage{breqn}
\usepackage{array, mathtools}
\usepackage{pgfplots}
\usepackage{siunitx}
\usepackage{algorithm}
\usepackage[noend]{algpseudocode}
\usepackage{tabulary}
\usepackage{cite}

\usepackage{stmaryrd}
\usepackage{tikz}
\usepackage{pgfplots}
\usetikzlibrary{decorations.markings}
\usetikzlibrary{patterns}
\usetikzlibrary{arrows}
\usetikzlibrary{shapes}
\usetikzlibrary{fit}
 \usetikzlibrary{calc}
\usetikzlibrary{decorations.pathreplacing}
\usetikzlibrary{arrows,positioning} 
\usetikzlibrary{pgfplots.groupplots}
\pgfplotsset{compat=1.5}

\usepackage{thmtools}
\declaretheorem[style=definition,name=Definition,qed=$\blacksquare$]{definition}
\declaretheorem[style=definition,name=Example,qed=$\blacksquare$]{example}
\declaretheorem[style=definition,name=Special Case,qed=$\blacksquare$]{specialblock}
\declaretheorem[style=definition,name=Remark,qed=$\blacksquare$]{remark}

\declaretheorem[style=plain,name=Proposition]{proposition}
\declaretheorem[style=plain,name=Lemma]{lemma}
\declaretheorem[style=plain,name=Theorem]{theorem}
\declaretheorem[style=plain,name=Corollary]{corollary}



\newcommand{\R}{\mathbb{R}}
\newcommand{\eR}{\overline{\R}}
\renewcommand{\S}{\mathcal{S}}
\newcommand{\T}{\mathcal{T}}
\newcommand{\X}{\mathcal{X}}
\newcommand{\Y}{\mathcal{Y}}

\newcommand{\W}{\mathcal{W}}
\newcommand{\C}{\mathcal{C}}


\newcommand{\rect}[1]{\llbracket #1 \rrbracket}


\title{\LARGE {\bf Computing Robustly Forward Invariant Sets for Mixed-Monotone Systems}}

\author{Matthew Abate and Samuel Coogan
\thanks{This work was supported by the Air Force Office of Scientific Research under grant FA9550-19-1-0015 and by the National Science Foundation under grant $\#$1749357.}
\thanks{M. Abate is with the School of Mechanical Engineering and the School of Electrical and Computer Engineering, Georgia Institute of Technology, Atlanta, 30332, USA {\tt\small Matt.Abate@GaTech.edu}.}
\thanks{S. Coogan is with the School of Electrical and Computer Engineering and the School of Civil and Environmental Engineering, Georgia Institute of Technology, Atlanta, 30332, USA {\tt\small Sam.Coogan@GaTech.edu}.}
}

\begin{document}

\maketitle
\thispagestyle{empty}
\pagestyle{empty}

\begin{abstract}
This work presents new tools for studying reachability and set invariance for continuous-time mixed-monotone dynamical systems subject to a disturbance input. The vector field of a mixed-monotone system is decomposable via a decomposition function into increasing and decreasing components, and this decomposition enables embedding the original dynamics in a higher-dimensional embedding system. While the original system is subject to an unknown disturbance input, the embedding system has no disturbances and its trajectories provide bounds for finite-time reachable sets of the original dynamics.  Our main contribution is to show how one can efficiently identify robustly forward invariant and attractive sets for mixed-monotone systems by studying certain equilibria of this embedding system.  We show also how this approach, when applied to the backward-time dynamics, establishes different robustly forward invariant sets for the original dynamics. Lastly, we present an independent result for computing decomposition functions for  systems with polynomial dynamics. These tools and results are demonstrated through several examples and a case study.
\end{abstract}


\section{Introduction}
When verifying dynamical systems against safety constraints, it is often necessary to explicitly compute forward invariant subsets of the system state space.  Given a candidate subset, forward invariance can be shown by, e.g., studying the vector field on the boundary of the set \cite{blanchini1999set} or using barrier certificates \cite{prajna}; however, it is generally difficult to identify such candidates. In this paper, we provide several tools for identifying robustly forward invariant and attractive sets for continuous-time \emph{mixed-monotone} systems subject to a disturbance input.  A dynamical system is mixed-monotone if there exists a related \emph{decomposition function} that decomposes the system's vector field into increasing and decreasing components; mixed-monotonicity applies to continuous-time systems \cite{SuffMM, Coogan:2016, ANGELI20149, Nonmonotone, Gouze:1994qy}, discrete-time systems \cite{Smith2006}, as well as systems with disturbances \cite{LTLandMM, TIRA, coogan2015efficient}, and it generalizes the \emph{monotonicity} property of dynamical systems for which trajectories maintain a partial order over states \cite{smith2008monotone, monotonicity}.

In the case with no disturbance, it is known that a $2n$-dimensional symmetric \emph{embedding system} can be constructed from the decomposition function of an $n$-dimensional mixed-monotone system.  This embedding system is monotone with respect to a particular southeast order and the original dynamics are contained in an invariant $n$-dimensional \emph{diagonal} subspace.  Thus, tools from monotone systems theory can be applied to the embedding system to conclude properties of the original dynamics; in particular, such approaches are useful for stability analysis \cite{smith2008global, Chu1998}, reachability analysis \cite{Smith2006}, and formal verification and synthesis \cite{maxverifacation, maxsynth}. When disturbances are present, it is also possible to construct a monotone embedding system from the original dynamics. In this case, the embedding system is nondeterministic with a $2m$-dimensional disturbance input when the original system is subject to an $m$-dimensional disturbance input.  This result has been applied in discrete-time \cite{TIRA, coogan2015efficient} and in continuous-time \cite{LTLandMM, TIRA} for the computation of robust reachable sets.

In this work, we consider continuous-time mixed-monotone systems with disturbances, however, unlike \cite{LTLandMM, TIRA} we study a \emph{deterministic} embedding system that arises from considering the worst case disturbance inputs. 
While this deterministic embedding system is straightforwardly derived from the aforementioned nondeterministic embedding system, its potential does not seem to have been fully appreciated or studied in the literature. 
In particular, unlike the deterministic embedding system that arises in the case with no disturbance, the diagonal of this new deterministic embedding system is not forward invariant; instead, a forward invariant \emph{triangular} region is induced above the diagonal. 
Our main result is to show that equilibria in this triangular region correspond to robustly forward invariant sets for the original system and that stable equilibria correspond to attractive sets for the original system.

As a second contribution, we demonstrate a new approach for generating decomposition functions for systems with polynomial dynamics.
There do not exist universal algorithms for generating closed-form decomposition functions, except in a few, albeit important, special cases. In particular, it is observed in \cite{7799445} that a decomposition function can be constructed if each off-diagonal entry of the Jacobian matrix of the system's vector field does not change sign over the state space, and this result is extended in \cite{SuffMM, LTLandMM, TIRA} to system's with uniformly bounded Jacobian matrices. 
While this special case is quite general, the suggested construction can provide conservative approximations of, e.g., reachable sets, and we show through example that our proposed alternate decomposition function construction can be less conservative and applicable to systems not satisfying the special case described above.

As a third contribution, we show that the basic results discussed above for forward-time reachability analysis can be extended for backward-time reachability analysis in the same setting.
This result relies on the observation that if there exists a decomposition function for the backward-time dynamics, then approximation in the backward-time setting is possible using a method analogous to that used in the forward-time case.
Moreover, we show how the technique presented for obtaining rectangular forward invariant sets can be applied to the backward-time dynamics to obtain forward invariant sets for the original dynamics that are the complement of rectangular regions.

In summary, our main contributions are as follows: 
(a) we show that robustly forward invariant sets for continuous-time mixed-monotone systems with disturbances can be obtained by studying certain equilibria in an appropriate deterministic embedding system that differs from that studied in existing literature.  We show also how the attractivity of these sets can be determined by studying the stability of the equilibria. 
(b) We suggest a new procedure for computing decomposition functions for polynomial systems, and this method can be implemented in certain instances when others cannot.
(c) We present a method for over-approximating backward reachable sets for mixed-monotone systems and this method also enables identifying robustly forward invariant sets for the original dynamics.
The results and tools created in this work are demonstrated through three examples and a case study\footnote{The code that accompanies the examples and generates the figures in this work is publicly available through the GaTech Facts Lab GitHub: https://github.com/gtfactslab/Abate\_CDC2020\_2.}.

\section{Notation}
We denote the set of nonnegative and nonpositive real numbers by $\mathbb{R}_{\geq 0}$ and $\mathbb{R}_{\leq 0}$, respectively, and the extended real numbers by $\eR := \R\cup \{-\infty,\infty\}$, $\eR_{\geq 0} := \R_{\geq 0}\cup \{\infty\}$, and $\eR_{\leq 0} := \R_{\leq 0}\cup \{-\infty\}$.

Let $(x,\, y)$ denote the vector concatenation of $x,\, y \in \R^n$, i.e. $(x,\,y) := [x^T \, y^T]^T \in \R^{2n}$, and let $\preceq$ denote the componentwise vector order, i.e. $x\preceq y$ if and only if $x_i \leq y_i$ for all $i\in\{1,\cdots, n\}$ where vector components are indexed via subscript. Given $x, y\in\mathbb{R}^n$ with $x\preceq y$, 
\begin{equation*}
[x,\,y]:=\left\{z\in \R^n \,\mid\, x \preceq z \text{ and } z \preceq y\right \}
\end{equation*}
denotes the hyperrectangle defined by the endpoints $x$ and $y$, and
we extend this notation to componentwise inequality of matrices, i.e., $Z\in [X,\, Y]$ for $X,Y,Z\in\mathbb{R}^{n\times m}$ means each entry of $Z$ is lower and upper bounded by the entries of $X$ and $Y$, respectively.
We also allow $x\in \eR^n$ and $y\in\eR^n$, in which case $[x,\,y]$ defines an \emph{extended hyperrectangle}, that is, a hyperrectangle with possibly infinite extent in some coordinates.  
Given $a=(x,\,y) \in \R^{2n}$ with $x \preceq y$, we denote by $\rect{a}$ the hyperrectangle formed by the first and last $n$ components of $a$, i.e.,  $\rect{a}:=[x,\,y]$.

Let $\preceq_{\rm SE}$ denote the \emph{southeast order} on $\eR^{2n}$ defined by
\begin{equation*}
(x,\, x') \preceq_{\rm SE} (y,\, y')
 \:\: \Leftrightarrow  \:\:  x \preceq y\text{ and } y' \preceq x'
\end{equation*}
where $x,\, y,\, x',\, y' \in \eR^n$.  In the case that $x \preceq x'$ and $y \preceq y'$, observe that
\begin{equation}
\label{eq:order_to_box}
(x,\, x') \preceq_{\rm SE} (y,\, y')
 \:\: \Leftrightarrow  \:\:
[\,y,\, y'\,] \subseteq [\,x,\, x'\,].
\end{equation}

\section{Preliminaries on Mixed-Monotone Dynamical Systems}

Consider a dynamical system with disturbance input, i.e., a nondeterministic system, given by
\begin{equation}\label{eq1}
    \dot{x} = F(x,\, w)
\end{equation}
for Lipschitz $F$ where $x\in \X \subseteq \R^n$ and $w \in \W\subset \R^m$ denote the system state and a bounded time-varying disturbance, respectively.  We assume $\X$ is an extended hyperrectangle with nonempty interior and $\W$ is a hyperrectangle\footnote{The assumption that $\X$ is an extended hyperrectangle and $\W$ is a hyperrectangle can be relaxed for some of the results of this paper, but for ease of exposition, we make this assumption throughout.} so that $\W = [\underline{w},\, \overline{w}]$ for some $\underline{w},\, \overline{w}\in \R^m$ with $\underline{w} \preceq \overline{w}$.

For $T\geq 0$, let $\Phi^F(T; x_0,\mathbf{w})$ denote the (assumed unique) state of \eqref{eq1} reached at time $T$ starting from $x_0 \in \X$ at time $0$ under the piecewise continuous disturbance input $\mathbf{w}:[0,\, T]\to \W$.
We do not a priori require  $\Phi^F(T; x_0,\mathbf{w})$ to exist for all $T$; however, existence of $\Phi^F(T; x_0,\mathbf{w})$ implicitly means that $\Phi^F(t; x_0,\mathbf{w})\in \mathcal{X}$ for all $0\leq t\leq T$.
Additionally, let
\begin{multline}
\label{eq:reachable}
     R^{F}(T;\, \X_0) :=
     \Big{\{}\Phi^{F}(T;\, x_0,\, \mathbf{w}) \in \X \,\Big{|}\,  
     x_0\in \X_0
     \\ \text{for some } \mathbf{w} : [0,\, T] \rightarrow \W\Big{\}}
\end{multline}  
denote the set of states that are reachable by \eqref{eq1} in time $T\geq 0$ from $\X_0\subseteq \X$ under some disturbance input.

\begin{definition}\label{def1}
A set $A\subseteq \X$ is \emph{robustly forward invariant} for \eqref{eq1} if  $\Phi^F(T; x_0,\mathbf{w})\in A$ for all $x_0\in A$, all $T\geq 0$ and all piecewise continuous inputs $\mathbf{w}:[0,\, T]\to \W$ whenever $\Phi^F(T; x_0,\mathbf{w})$ exists. 
When $F$ does not depend on $w$ we simply say $A$ is \emph{forward invariant}.
\end{definition}

In this paper, we focus specifically on systems that are \emph{mixed-monotone} \cite{Coogan:2016}.

\begin{definition}\label{def2}
Given a locally Lipschitz continuous function $d : \X \times \W \times \X \times \W \rightarrow \mathbb{R}^n$, the system \eqref{eq1} is \emph{mixed-monotone with respect to $d$}  if all of the following hold:
\begin{itemize}
    \item For all $x \in \X$ and all $w \in \W$, $d(x,\, w,\, x,\, w) = F(x,\, w)$.
    \item For all $i,\, j \in \{1,\, \cdots,\, n\}$ with $i \neq j$,  $\frac{\partial d_i}{\partial x_j}(x,\, w,\,  \widehat{x},\,  \widehat{w}) \geq 0$ for all $x,\,  \widehat{x} \in \X$ and all $w,\,  \widehat{w} \in \W$ whenever the derivative exists.
    \item For all $i,\, j \in \{1,\, \cdots,\, n\}$, $\frac{\partial d_i}{\partial  \widehat{x}_j}(x, w,  \widehat{x}, \widehat{w}) \leq 0$ for all $x,\,  \widehat{x} \in \X$ and all $w,\,  \widehat{w} \in \W$ whenever the derivative exists.
    \item For all $i\in \{1,\, \cdots,\, n\}$ and all $k \in \{1,\, \cdots,\, m\}$, 
    $\frac{\partial d_i}{\partial w_k}(x,\, w,\,  \widehat{x},\,  \widehat{w}) \geq 0$ and $\frac{\partial d_i}{\partial  \widehat{w}_k}(x,\, w,\,  \widehat{x},\,  \widehat{w}) \leq 0$
    for  all $x,\,  \widehat{x} \in \X$ and all $w,\,  \widehat{w} \in \W$ whenever the derivative exists. \qedhere
\end{itemize}
\end{definition}

If \eqref{eq1} is mixed-monotone with respect to $d$, $d$ is said to be a \emph{decomposition function} for \eqref{eq1}, and when $d$ is clear from context we simply say \eqref{eq1} is mixed-monotone. 

There does not exist general algorithms for computing closed-form decomposition functions except for some albeit important special cases, such as those described below. As a rule of thumb, useful decomposition functions should be such that $d(x,w,\widehat{x},\widehat{w})$ is close to $F(x,w)$ when $x$ is close to $\widehat{x}$ and $w$ is close to $\widehat{w}$, but we do not provide a formal notion of closeness and observe that decomposition function construction usually leverages structural properties of $F$ or domain knowledge of the underlying physical system. 

We next present a special case for which the explicit construction of a decomposition function is possible. 
In particular, if each off-diagonal entry of $\frac{\partial F}{\partial x}$ and each entry of $\frac{\partial F}{\partial w}$ is either lower or upper bounded uniformly, then \eqref{eq1} is mixed-monotone and a decomposition function is constructed from $F$ and these bounds.

\begin{specialblock}\label{spec1}
  If there exists $\underline{J}_x\in \eR_{\leq 0}^{n\times n}$, $\overline{J}_x\in\eR_{\geq 0}^{n\times n}$, $\underline{J}_w\in\eR_{\leq 0}^{n\times m}$, and $\overline{J}_w\in\eR_{\geq 0}^{n\times m}$ such that 
\begin{itemize}
\item for all $x\in \X$ and all $w\in \W$, 
\begin{equation*}
\frac{\partial F}{\partial x}(x,w)\in [\underline{J}_x, \overline {J}_x] \:\text{ and }\: \frac{\partial F}{\partial w}(x,w)\in [\underline{J}_w, \overline {J}_w],
\end{equation*}
\item   for all $i\neq j$, 
$(\underline{J}_x)_{i,j}>-\infty \:\;\text{ or }\:\; (\overline{J}_x)_{i,j}<\infty,$
and
\item for all $i,\, k$, 
$(\underline{J}_w)_{i,k}>-\infty \:\; \text{ or }\:\; (\overline{J}_w)_{i,k}<\infty,$
\end{itemize}
then \eqref{eq1} is mixed-monotone and a decomposition function is constructed in the following way: 
\begin{enumerate}
\item  For all $i, j \in \{1,\cdots,n\}$ with $i \neq j$ and all $k \in \{1,\cdots, m\}$, choose  $\delta_{i,j},\,\epsilon_{i,k}\in\{0,1\}$ such that 
\begin{equation*}
\begin{array}{lcl}
    \delta_{i,j}=0 & \Rightarrow & (\underline{J}_x)_{i,j}\neq -\infty, \\
    \delta_{i,j}=1 & \Rightarrow & (\overline{J}_x)_{i,j}\neq \infty, \\
    \epsilon_{i,k}=0 & \Rightarrow & (\underline{J}_w)_{i,k}\neq -\infty, \\
    \epsilon_{i,k}=1 & \Rightarrow & (\overline{J}_w)_{i,k}\neq \infty.
\end{array}
\end{equation*}
Note that such a choice exists by hypothesis.
\item For all $i \in \{1,\cdots,n\}$, define $\xi^i,\, \alpha^i\in \mathbb{R}^n$ and $\pi^i,\, \beta^i\in \mathbb{R}^m$ element-wise according to
\begin{align*}
          (\xi^i_j,\alpha^i_j)&=
            \begin{cases}
                (x_i,0) & \text{if $i = j$,}\\
                (x_j,-(\underline{J}_x)_{i,j})&\text{if $i\neq j$ and $\delta_{i,j}=0$},\\
                (\widehat{x}_j,(\overline{J}_x)_{i,j})&\text{if $i\neq j$ and $\delta_{i,j}=1$},
            \end{cases}\\
            (\pi^i_k,\beta^i_k)&=
            \begin{cases}
                (w_k,-(\underline{J}_w)_{i,k})&\text{if $\epsilon_{i,k}=0$},\\
                (\widehat{w}_k,(\overline{J}_w)_{i,k})&\text{if $\epsilon_{i,k}=1$}.
            \end{cases}
\end{align*}
\item Define the $i^{\text{th}}$ element of $d$ according to
\begin{multline}
\label{eq:decomp}
d_i(x,w,\widehat{x},\widehat{w})=F_i(\xi^i,\pi^i) + (\alpha^i)^T(x-\widehat{x})    \\  + (\beta^i)^T(w-\widehat{w}),\:\:
\end{multline}
which is always well-defined on $\X\times \W \times \X \times \W$ since $\X$ and $\W$ are assumed to be hyperrectangles. \qedhere
\end{enumerate}
\end{specialblock}

\begin{remark}\label{rem1}
All \emph{monotone} dynamical systems satisfy the hypothesis of Special Case \ref{spec1}, and thus mixed-monotonicity generalizes the classical notion of monotonicity \cite{monotonicity}.  In particular, if \eqref{eq1} is monotone, i.e., 
\begin{itemize}
    \item for all $i\neq j$, $\frac{\partial F_i}{\partial x_j}(x,w)\geq 0$ for all $x\in \X$, $w\in \W$ whenever the derivative exists, and
    \item for all $i,k$, $\frac{\partial F_i}{\partial w_k}(x,w)\geq 0$ for all $x\in \X$, $w\in \W$ whenever the derivative exists,
\end{itemize}
then \eqref{eq1} is mixed-monotone with decomposition function $d(x,\, w,\, \widehat{x},\, \widehat{w})=F(x,\, w)$.
\end{remark}

A restrictive version of Special Case \ref{spec1} requiring sign-stability of the Jacobian matrices is first introduced in \cite{7799445}, and the essential observation that this extends to the case when the entries of the Jacobian are bounded is made in \cite{SuffMM} and is also in \cite{LTLandMM, TIRA}.  However, \cite{SuffMM, LTLandMM} require the diagonal entries of $\frac{\partial F}{\partial x}$ to be also bounded, and \cite{LTLandMM, TIRA} do not allow the other entries to be unbounded in one direction.  

The key feature of mixed-monotone systems that we exploit in this paper is that over-approximations of reachable sets can be efficiently computed by considering a deterministic auxiliary system constructed from the decomposition function.
We first consider the nondeterministic system
\begin{equation}\label{fakeembedding}
\begin{bmatrix}
  \dot{x}\\
  \dot{\widehat{x}} 
\end{bmatrix}
  = \varepsilon(x,\, w,\,  \widehat{x},\,\widehat{w})
  := 
\begin{bmatrix}
  d (x,\, w,\,  \widehat{x},\,\widehat{w})\\
  d ( \widehat{x},\,\widehat{w},\, x,\, w) 
\end{bmatrix}
\end{equation}
with state $(x,\, \widehat{x})\in \X\times\X$ and disturbance input $(w,\, \widehat{w})\in \W\times\W$.  We call \eqref{fakeembedding} the \emph{embedding system} relative to $d$, and we use $\Phi^{\varepsilon}( t;\, (\underline{x},\, \overline{x}), (\mathbf{w},\, \widehat{\mathbf{w}}))$ to denote the state of \eqref{fakeembedding} at time $t$ when initialized at $(\underline{x},\, \overline{x}) \in \X\times\X$ and when subjected to the piecewise continuous input $(\mathbf{w},\, \widehat{\mathbf{w}})$.  
Importantly, \eqref{fakeembedding} is a monotone control system as defined in \cite{monotonicity} when the orders on $\X\times \X$ and $\W\times \W$ are both taken to be the southeast orders; that is, if $a,\, a' \in \X\times \X$ and $\mathbf{b},\, \mathbf{b}': [0,\, \infty) \rightarrow \W\times\W$ satisfy $a \preceq_{\rm SE} a'$ and $\mathbf{b}(t) \preceq_{\rm SE} \mathbf{b}'(t)$ for all $t\geq 0$, then
\begin{equation}
    \Phi^{\varepsilon}( t;\, a,\, \mathbf{b}) \preceq_{\rm SE} \Phi^{\varepsilon}( t;\, a',\, \mathbf{b}')
\end{equation}
for all $t \geq 0$, provided $\Phi^{\varepsilon}(\, \cdot\,;\, a,\, \mathbf{b})$ and $\Phi^{\varepsilon}(\,\cdot\,;\, a',\, \mathbf{b}')$ remain in $\X\times \X$ on $[0,\,t]$.

Define $\Delta:=\{(x,\widehat{x})\in \X\times \X\mid x=\widehat{x}\}$ the \emph{diagonal} of the embedding system. 
Then for all $a\in \Delta$ and all $\mathbf{w}:[0,\infty)\to \W$ we have $\Phi^{\varepsilon}(t;\, a,\, (\mathbf{w},\mathbf{w}) )\in \Delta$ for all $t\geq 0$, i.e., $\Delta$ is robustly forward invariant for \eqref{fakeembedding} when the restriction $w=\widehat{w}$ is imposed.

Throughout most of this paper, we instead utilize a \emph{deterministic embedding system} given by
\begin{equation}\label{eq:embedding}
\begin{bmatrix}
  \dot{x}\\
  \dot{ \widehat{x}} 
\end{bmatrix}
  = e(x,\, \widehat{x})
  := 
\begin{bmatrix}
  d (x,\, \underline{w},\,  \widehat{x},\,\overline{w})\\
  d ( \widehat{x},\,\overline{w},\, x,\, \underline{w}) 
\end{bmatrix},
\end{equation}
with state transition function $\Phi^{e}( t;\, a) = \Phi^{\varepsilon}( t;\, a,\, (\underline{w},\, \overline{w}))$.  Note that
\begin{equation}
    \Phi^{e}( t;\, a) \preceq_{\rm SE} \Phi^{e}( t;\, a')
\end{equation}
for all $a,\, a' \in \X\times \X$ with $a \preceq_{\rm SE} a'$ and for all $t \geq 0$, i.e., \eqref{eq:embedding} is monotone with respect to the southeast order. However, unlike \eqref{fakeembedding}, $\Delta$ does not generally enjoy a forward invariance property for \eqref{eq:embedding} when $\underline{w}\neq \overline{w}$.

We next recall the following result establishing that the reachable set $R^F$ is over-approximated by solutions to the deterministic embedding system \eqref{eq:embedding}. The proof of this result appears in \cite[Appendix B1]{TIRA}, however, we provide our own proof here for completeness.

\begin{proposition}\label{prop:p1} 
Let \eqref{eq1} be mixed-monotone with respect to $d$, and consider $\X_0 = [\underline{x},\, \overline{x}]$ for some $\underline{x}\preceq \overline{x}$.
If $\Phi^{e}( t;\, (\underline{x},\, \overline{x}))\in \X\times \X$ for all $0\leq t\leq T$, then $R^F(T;\, \X_0) \subseteq \rect{\Phi^{e}( T;\, (\underline{x},\, \overline{x}))}.$
\end{proposition}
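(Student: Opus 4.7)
The plan is to obtain the inclusion by invoking the southeast-order monotonicity of the nondeterministic embedding system \eqref{fakeembedding}, sandwiching a diagonal trajectory that carries the actual dynamics of \eqref{eq1} between the deterministic embedding trajectory started at $(\underline{x},\overline{x})$ and itself.

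Fix an arbitrary $x_0 \in \X_0 = [\underline{x},\overline{x}]$ and an arbitrary piecewise continuous input $\mathbf{w}:[0,T]\to\W$ for which $\Phi^F(T;x_0,\mathbf{w})$ exists. The first step is to establish two southeast-order comparisons directly from the definition of $\preceq_{\rm SE}$. Since $\underline{x} \preceq x_0 \preceq \overline{x}$, we immediately have $(\underline{x},\overline{x}) \preceq_{\rm SE} (x_0,x_0)$, and since $\underline{w} \preceq \mathbf{w}(t) \preceq \overline{w}$ pointwise on $[0,T]$, we also have $(\underline{w},\overline{w}) \preceq_{\rm SE} (\mathbf{w}(t),\mathbf{w}(t))$ pointwise on $[0,T]$.

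Second, I would apply the southeast-order monotonicity property of \eqref{fakeembedding} stated in the text to conclude
\begin{equation*}
\Phi^{\varepsilon}\bigl(T;\,(\underline{x},\overline{x}),\,(\underline{w},\overline{w})\bigr) \preceq_{\rm SE} \Phi^{\varepsilon}\bigl(T;\,(x_0,x_0),\,(\mathbf{w},\mathbf{w})\bigr).
\end{equation*}
The left side equals $\Phi^{e}(T;\,(\underline{x},\overline{x}))$ by construction of \eqref{eq:embedding}. For the right side, the diagonal invariance noted just after \eqref{fakeembedding}, together with the consistency condition $d(x,w,x,w)=F(x,w)$, implies that both $n$-dimensional components of $\Phi^{\varepsilon}(T;\,(x_0,x_0),\,(\mathbf{w},\mathbf{w}))$ coincide and equal $\Phi^F(T;x_0,\mathbf{w})$. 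Writing $\Phi^{e}(T;\,(\underline{x},\overline{x})) = (\underline{y}(T),\overline{y}(T))$ and unpacking $\preceq_{\rm SE}$ then yields $\underline{y}(T) \preceq \Phi^F(T;x_0,\mathbf{w}) \preceq \overline{y}(T)$, which by \eqref{eq:order_to_box}-style reasoning is exactly $\Phi^F(T;x_0,\mathbf{w}) \in \rect{\Phi^{e}(T;\,(\underline{x},\overline{x}))}$.

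The main subtlety I expect is the in-domain hypothesis behind the monotonicity step: both compared trajectories of \eqref{fakeembedding} must stay in $\X\times\X$ throughout $[0,T]$. The deterministic trajectory is in $\X\times\X$ by the standing assumption of the proposition. The diagonal trajectory projects to $\Phi^F(\cdot\,;x_0,\mathbf{w})$ in each copy of $\X$, and the paper's convention on $\Phi^F$ is that existence at time $T$ implicitly means the trajectory stays in $\X$ on $[0,T]$, so the diagonal trajectory is admissible as well. With both trajectories in $\X\times\X$, the monotonicity comparison is valid, and since $x_0 \in \X_0$ and $\mathbf{w}$ were arbitrary, the inclusion $R^F(T;\X_0) \subseteq \rect{\Phi^{e}(T;\,(\underline{x},\overline{x}))}$ follows.
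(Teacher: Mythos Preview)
Your proposal is correct and follows essentially the same approach as the paper's own proof: both compare the deterministic embedding trajectory $\Phi^{\varepsilon}(T;(\underline{x},\overline{x}),(\underline{w},\overline{w}))$ with the diagonal trajectory $\Phi^{\varepsilon}(T;(x_0,x_0),(\mathbf{w},\mathbf{w}))$ via the southeast-order monotonicity of \eqref{fakeembedding}, then identify the latter with $\Phi^F(T;x_0,\mathbf{w})$ on the diagonal. Your treatment is in fact slightly more careful than the paper's in that you explicitly verify the in-domain hypothesis needed for the monotonicity comparison.
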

\begin{proof}
Choose $x \in [\underline{x},\, \overline{x}]$ and $\mathbf{w} : [0,\, T] \rightarrow \W$ for some $T \geq 0$.  Then from \eqref{fakeembedding} we have
\begin{equation*}
    (\Phi^F(T;\, x,\, \mathbf{w}), \Phi^F(T;\, x,\, \mathbf{w}))
    =
    \Phi^{\varepsilon}(T;\, (x,\, x),\, (\mathbf{w},\, \mathbf{w}) )
\end{equation*}
and
\begin{equation*}
    \Phi^{e}(T;\, (\underline{x},\, \overline{x}))
    =
    \Phi^{\varepsilon}(T;\, (\underline{x},\, \overline{x}),\, (\underline{w},\, \overline{w}) ).
\end{equation*}
Since $(\underline{x},\, \overline{x}) \preceq_{\rm SE} (x,\, x)$ and $(\underline{w},\, \overline{w})  \preceq_{\rm SE} (\mathbf{w}(t),\, \mathbf{w}(t))$ for all $t \in [0,\, T]$, we now have 
\begin{equation}
    \Phi^{e}(T;\, (\underline{x},\, \overline{x}))
    \preceq_{\rm SE}
    \begin{bmatrix}
    \Phi^F(T;\, x,\, \mathbf{w}) \\
    \Phi^F(T;\, x,\, \mathbf{w})
    \end{bmatrix}
\end{equation}
and thus $\Phi^F(T;\, x,\, \mathbf{w}) \in [\Phi^{e}(T;\, (\underline{x},\, \overline{x}))]$.  Therefore $R^F(T;\, \X_0) \subseteq [\Phi^{e}( T;\, (\underline{x},\, \overline{x}))]$.
\end{proof}

It is important to note that the usefulness of the mixed-monotonicity property for stability and reachability analysis---the main focus of this paper---is entirely dependent on the \emph{choice} of $d$. In general, a mixed-monotone system will be mixed-monotone with respect to many decomposition functions; however, certain decomposition functions may be more/less conservative than others when used with Proposition \ref{prop:p1}. One key observation of this paper is that the decomposition function construction presented as \eqref{eq:decomp} can be overly conservative or not possible, and we show through example how alternative decomposition functions are generally less conservative.

\section{Decomposition Functions for Polynomial Vector Fields}

Given the generality of the hypotheses of Special Case \ref{spec1}, two natural questions arise: First, are there systems that do not satisfy the hypotheses of Special Case \ref{spec1} but are nonetheless mixed-monotone with respect to some decomposition function other than \eqref{eq:decomp}? Second, for systems that do satisfy the hypotheses of Special Case \ref{spec1}, do there exist other, perhaps more useful, decomposition functions than \eqref{eq:decomp}? 
In the following example, we answer both questions affirmatively and illustrate a new technique for obtaining decomposition functions of polynomial systems.

\begin{figure*}[t!]
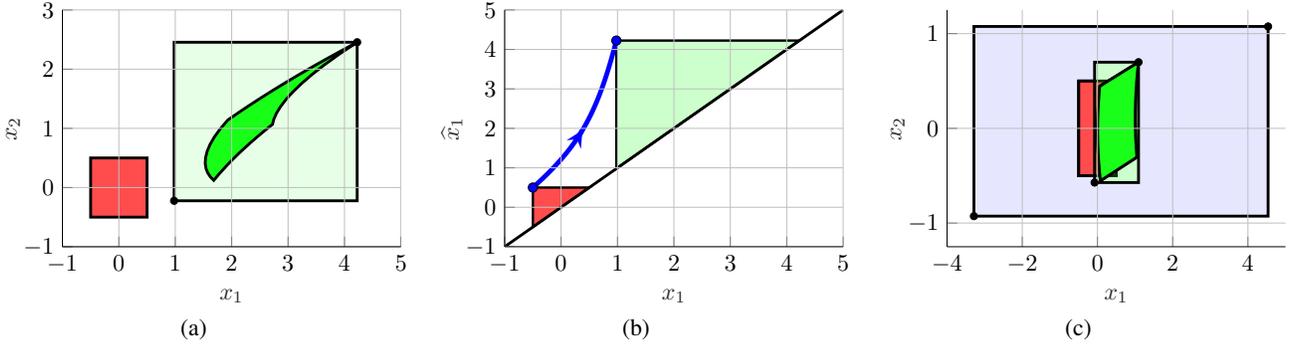

    \begin{subfigure}{0.31\textwidth}
    \scalebox{.9}{
%
%
\definecolor{mycolor1}{rgb}{0.92900,0.69400,0.12500}%

\begin{tikzpicture}
\path[use as bounding box](-1cm,-.8cm) rectangle +(6cm,4.4cm);
\begin{axis}[%
width=5cm,
height=3.5cm,
at={(0in,0in)},
scale only axis,
xmin=-1,
xmax=5,
xtick={-1,  0,  1,  2,  3,  4, 5}, 
xlabel style={font=\color{white!15!black}},
xlabel={$x_1$},
ymin=-1,
ymax=3,
ytick={-1, 0,  1, 2, 3},
ylabel style={at={(-0.1,0.5)}, font=\color{white!15!black}},
ylabel={$x_2$},
axis background/.style={fill=white},
axis x line*=bottom,
axis y line*=left,
xmajorgrids,
ymajorgrids,
axis on top
]

\addplot[area legend, line width=1.2pt, draw=black, fill=red, fill opacity=0.7, forget plot]
table[row sep=crcr] {%
x	y\\
-0.5	-0.5\\
-0.4	-0.5\\
-0.3	-0.5\\
-0.2	-0.5\\
-0.1	-0.5\\
0	-0.5\\
0.1	-0.5\\
0.2	-0.5\\
0.3	-0.5\\
0.4	-0.5\\
0.5	-0.5\\
0.5	-0.4\\
0.5	-0.3\\
0.5	-0.2\\
0.5	-0.1\\
0.5	0\\
0.5	0.1\\
0.5	0.2\\
0.5	0.3\\
0.5	0.4\\
0.5	0.5\\
0.4	0.5\\
0.3	0.5\\
0.2	0.5\\
0.1	0.5\\
0	0.5\\
-0.1	0.5\\
-0.2	0.5\\
-0.3	0.5\\
-0.4	0.5\\
-0.5	0.5\\
-0.5	0.4\\
-0.5	0.3\\
-0.5	0.2\\
-0.5	0.1\\
-0.5	0\\
-0.5	-0.1\\
-0.5	-0.2\\
-0.5	-0.3\\
-0.5	-0.4\\
-0.5	-0.5\\
}--cycle;

\addplot[area legend, line width=1.2pt, draw=black, fill=green, fill opacity=0.1, forget plot]
table[row sep=crcr] {%
x	y\\
0.979735229655718	-0.223702226579982\\
1.30423375319382	-0.223702226579982\\
1.62873227673191	-0.223702226579982\\
1.95323080027001	-0.223702226579982\\
2.27772932380811	-0.223702226579982\\
2.60222784734621	-0.223702226579982\\
2.92672637088431	-0.223702226579982\\
3.2512248944224	-0.223702226579982\\
3.5757234179605	-0.223702226579982\\
3.9002219414986	-0.223702226579982\\
4.2247204650367	-0.223702226579982\\
4.2247204650367	0.0441554684493733\\
4.2247204650367	0.312013163478729\\
4.2247204650367	0.579870858508084\\
4.2247204650367	0.847728553537439\\
4.2247204650367	1.11558624856679\\
4.2247204650367	1.38344394359615\\
4.2247204650367	1.6513016386255\\
4.2247204650367	1.91915933365486\\
4.2247204650367	2.18701702868422\\
4.2247204650367	2.45487472371357\\
3.9002219414986	2.45487472371357\\
3.5757234179605	2.45487472371357\\
3.2512248944224	2.45487472371357\\
2.92672637088431	2.45487472371357\\
2.60222784734621	2.45487472371357\\
2.27772932380811	2.45487472371357\\
1.95323080027001	2.45487472371357\\
1.62873227673191	2.45487472371357\\
1.30423375319382	2.45487472371357\\
0.979735229655718	2.45487472371357\\
0.979735229655718	2.18701702868422\\
0.979735229655718	1.91915933365486\\
0.979735229655718	1.6513016386255\\
0.979735229655718	1.38344394359615\\
0.979735229655718	1.11558624856679\\
0.979735229655718	0.847728553537439\\
0.979735229655718	0.579870858508084\\
0.979735229655718	0.312013163478729\\
0.979735229655718	0.0441554684493733\\
0.979735229655718	-0.223702226579982\\
}--cycle;

\addplot[area legend, line width=1.2pt, draw=black, fill=green, fill opacity=0.9, forget plot]
table[row sep=crcr] {%
x	y\\
1.68302138159568	0.121993086103313\\
1.76015774985729	0.209747379074114\\
1.8429050733731	0.298957361637399\\
1.93139932981606	0.389642913027619\\
2.02577962349615	0.481824229462162\\
2.12618826055129	0.575521829962475\\
2.23277082609746	0.670756562297627\\
2.34567626339393	0.767549609053254\\
2.4650569550808	0.865922493828845\\
2.59106880654812	0.965897087566484\\
2.72387133149772	1.06749561501416\\
2.74763306207465	1.1557763559593\\
2.79307395399336	1.25383784823433\\
2.86219108885708	1.36215532600483\\
2.95718437634688	1.48123629123835\\
3.08047619831661	1.61162285120856\\
3.23473316894653	1.75389425634589\\
3.42289025857682	1.90866965784314\\
3.64817756071541	2.07661110655015\\
3.91415001807577	2.25842681707799\\
4.2247204650367	2.45487472371357\\
3.94846751450065	2.31371978529659\\
3.68337188566274	2.17484265475353\\
3.42916022694129	2.038209473659\\
3.18556578792349	1.90378696895352\\
2.95232825032407	1.77154244129062\\
2.72919356364142	1.64144375364846\\
2.51591378536907	1.51345932019937\\
2.31224692562502	1.38755809543062\\
2.1179567960661	1.26370956351008\\
1.93281286295861	1.14188372789053\\
1.79691353306218	0.991056371990598\\
1.69217333313875	0.85257654190603\\
1.61573189596617	0.725818279977681\\
1.56499526278975	0.610196243191891\\
1.53761152462038	0.505162864037498\\
1.53144887813123	0.410205739530193\\
1.54457584037302	0.32484522807017\\
1.57524339571181	0.248632235802881\\
1.62186887401088	0.181146175940976\\
1.68302138159568	0.121993086103313\\
}--cycle;
\addplot[only marks, mark=*, mark options={}, mark size=1.5000pt, color=black, fill=black] table[row sep=crcr]{%
x	y\\
0.979735229655718	-0.223702226579982\\
};
\addplot[only marks, mark=*, mark options={}, mark size=1.5000pt, color=black, fill=black] table[row sep=crcr]{%
x	y\\
4.2247204650367	2.45487472371357\\
};

\end{axis}
\end{tikzpicture}%
        }
        \caption{ }
        \label{fig:ex1_1}
    \end{subfigure}
    ~
    \begin{subfigure}{0.31\textwidth}
    \scalebox{.9}{
        \input{Ex1_2.tikz}
        }
        \caption{ }
        \label{fig:ex1_2}
    \end{subfigure}
    ~
    \begin{subfigure}{0.31\textwidth}
    \scalebox{.9}{
        \input{Ex1_3.tikz}
        }
        \caption{ }
        \label{fig:ex1_3}
    \end{subfigure}
    \caption{  
    Approximating forward reachable sets of \eqref{eq3} from the set of initial conditions $\X_0 = [-1/2,\, 1/2]\times[-1/2,\, 1/2]$. (a) $\X_0$ is shown in red. $R^F(1;\, \X_0)$ is shown in green.
        The hyperrectangular over-approximation of $R^F(1;\, \X_0)$, that is computed from the embedding system \eqref{eq:embedding} as described in Proposition \ref{prop:p1}, is shown in light green. (b)  Visualisation of the bounding procedure from Proposition \ref{prop:p1}. 
        The trajectory of \eqref{eq:embedding} that yields Figure \ref{fig:ex1_1} is shown in blue, where $\Phi^e$ is projected to the $x_1,\, \widehat{x}_1$ plane. 
        The southeast cones corresponding to $\X_0$ and the hyperrectangular over-approximation of $R^F(1;\, \X_0)$ are shown in red and green, respectively. (c) Approximating $R^F(1/4;\, \X_0)$. $\X_0$ is shown in red. $R^F(1/4;\, \X_0)$ is shown in green with over-approximations derived from $d$ and $d'$ shown in green and blue, respectively.
    }
    \label{fig1}
\end{figure*}

\begin{example}\label{ex1}
Consider the system
\begin{equation}\label{eq3}
\begin{bmatrix}
\dot{x}_1 \\ \dot{x}_2
\end{bmatrix} = 
F(x) = 
\begin{bmatrix}
x_2^2 + 2 \\ x_1
\end{bmatrix}
\end{equation}
with $\X = \R^2$. Note that $\frac{\partial F_1}{\partial x_2} = 2x_2$ is neither lower or upper bounded on $\X$ and thus the system does not satisfy the hypotheses of Special Case \ref{spec1}.
However, \eqref{eq3} is mixed-monotone on $\X$ with decomposition function
\begin{equation}\label{eq4}
\begin{split}
     d_1(x,\, \widehat{x}) &= 
    \begin{cases}
    x_2^2 + 2
    & \text{if } x_2 \geq 0 \text{ and } x_2 \geq -\widehat{x}_2, \\
    \vspace{-.4cm} \\
    \widehat{x}_2^2 + 2
     & \text{if } \widehat{x}_2 \leq 0 \text{ and } x_2 < -\widehat{x}_2, \\ 
    \vspace{-.4cm} \\
    x_2 \widehat{x}_2 + 2
     & \text{if } x_2 < 0 \text{ and } \widehat{x}_2 > 0,
    \end{cases} \\
    d_2(x,\, \widehat{x}) &= x_1.
\end{split}
\end{equation}
Consider now a hyperrectangular set of initial conditions $\mathcal{X}_0$. Proposition \ref{prop:p1}  implies that the reachable set \eqref{eq:reachable} from $\mathcal{X}_0$ is approximated by a rectangular set defined from the state transition function of the $2n$ dimensional embedding system \eqref{eq:embedding}. An example is shown in Figures \ref{fig:ex1_1} and \ref{fig:ex1_2}.

Even though $\frac{\partial F}{\partial x}$ is not uniformly bounded on $\X = \R^2$, we can restrict our analysis to a compact subset $\X' \subset \X$ so that the decomposition function construction defined in Special Case \ref{spec1} is applicable. For instance, take $\X' = [-5,\, 5] \times [-5,\, 5]$, and note that $-10 \leq \frac{\partial F_1}{\partial x_2}(x) \leq 10$ for all $x \in \X'$. Applying Special Case \ref{spec1}, with $\delta_{1, 2} = \delta_{2, 1} = 1$, we have that
\begin{equation}
    d'(x,\, \widehat{x}) = 
    \begin{bmatrix}
    \widehat{x}^2_2 + 2 + 10(x_2 - \widehat{x}_2)
    \\
    x_1
    \end{bmatrix}
\end{equation}
is a decomposition function for \eqref{eq3} on $\X'$. Proposition \ref{prop:p1} then allows for computing reachable sets for \eqref{eq3} using $d'$ so long as the trajectories of the resulting embedding system remain within $\mathcal{X}'\times\mathcal{X}'$. An example is shown in Figure \ref{fig:ex1_3} where the reachable set computed using $d'$ is compared to the reachable set computed using $d$.  Note that, even though Special Case \ref{spec1} is made applicable by restricting the domain, the decomposition function given by \eqref{eq4} allows for a significantly tighter approximation of $R^F$.
\end{example}

Example \ref{ex1} suggests a new method for computing piecewise decomposition functions for \eqref{eq1} when $F$ is polynomial in $x$ and $w$.  This method has two steps:
\begin{enumerate}
\item Calculate all polynomial functions in $x,\, \widehat{x},\, w,\, \widehat{w}$ that evaluate to \eqref{eq1} when $x = \widehat{x}$ and $w = \widehat{w}$, and then
\item Form a continuous decomposition function as a piecewise combination of these polynomials, such that the remaining conditions from Definition \ref{def2} are satisfied.
\end{enumerate}
Due to space constraints, we do not present a formal algorithm for obtaining such decomposition functions, but the idea extends to systems with polynomial vector fields of arbitrary dimension and is applied in examples below.

\section{On Forward Invariance and Mixed-Monotone Systems}
In this section, we show how the embedding system \eqref{eq:embedding} can be used to efficiently compute sets that are robustly forward invariant for \eqref{eq1}.
Further, we leverage the monotonicity of \eqref{eq:embedding} to compute sets that are \emph{attractive} for \eqref{eq1}.

\begin{definition}[{\hspace{1sp}\cite{MAYHEW20111045}\hspace{1sp}}]\label{def3}
A set $A \subset \X$ is \emph{attractive from $\X' \subset \X$}, or simply \emph{attractive} for \eqref{eq1}, if for each solution $\Phi^F(\,\cdot\,;\, x_0,\, \mathbf{w})$ to \eqref{eq1} with $x_0\in \X'$ and piecewise continuous $\mathbf{w}$ and each relatively open neighborhood $\X_{\epsilon} \subset \X$ of $A$, there exists $T > 0$ such that $\Phi^F(t;x_0,\mathbf{w}) \in \X_{\epsilon}$ for all $t \geq T$.
When $\X'=\X$, we say $A$ is \emph{globally attractive}.
\end{definition}

While the nondeterministic embedding system \eqref{fakeembedding} has appeared in the literature before, along with connections to reachable set computations, the deterministic embedding system \eqref{eq:embedding} has not been fully considered. We begin with two lemmas on forward invariant regions for the embedding system \eqref{eq:embedding}.  These results are then related to invariant (and attractive) sets for \eqref{eq1} in Theorem \ref{thrm1}.

Define by $\T := \{(x,\, \widehat{x}) \in \X\times \X \,\vert\, x \preceq \widehat{x}\}$ the \emph{upper triangle} of the embedding system, and define by $\S := \{(x,\, \widehat{x}) \in \T \,\vert\, 0 \preceq_{\rm SE} e(x,\, \widehat{x}) \}$, the set of points in $\mathcal{T}$ such that the embedding system's vector field points into the southeast cone.

\begin{lemma}\label{lem1}
The set $\T$ is forward invariant for \eqref{eq:embedding}.
\end{lemma}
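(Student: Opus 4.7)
The plan is to establish forward invariance of $\T$ via a Nagumo-style tangential argument: at any point $(x,\widehat{x})$ on the boundary $\partial \T \cap (\X \times \X)$, at least one constraint $x_i \leq \widehat{x}_i$ is active, and I need to show that the component of the vector field in the direction of the outward normal is nonpositive. Concretely, if $x_i = \widehat{x}_i$, I want
\begin{equation*}
\frac{d}{dt}(\widehat{x}_i - x_i) = d_i(\widehat{x},\overline{w},x,\underline{w}) - d_i(x,\underline{w},\widehat{x},\overline{w}) \geq 0.
\end{equation*}

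The core of the proof is a four-step interpolation between the two evaluations of $d_i$, each step exploiting one of the four monotonicity properties listed in Definition \ref{def2}. First I would move the first argument from $x$ to $\widehat{x}$: since $x \preceq \widehat{x}$ and the $i$-th coordinate already agrees ($x_i = \widehat{x}_i$), only off-diagonal components change, and $\partial d_i / \partial x_j \geq 0$ for $j \neq i$ implies $d_i$ does not decrease. Second, I would raise the second argument from $\underline{w}$ to $\overline{w}$, invoking $\partial d_i / \partial w_k \geq 0$. Third, I would lower the third argument from $\widehat{x}$ to $x$, invoking $\partial d_i / \partial \widehat{x}_j \leq 0$. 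Fourth, I would lower the fourth argument from $\overline{w}$ to $\underline{w}$, invoking $\partial d_i / \partial \widehat{w}_k \leq 0$. Chaining the four monotone steps yields the desired inequality. Since $d$ is only guaranteed locally Lipschitz and the partial derivatives need only exist almost everywhere, I would phrase each step as a line-integral comparison (or invoke Rademacher/Clarke monotonicity) so that the argument works without assuming differentiability everywhere.

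With the tangential condition in hand, forward invariance follows from a standard argument: suppose for contradiction that a solution $(x(\cdot),\widehat{x}(\cdot))$ of \eqref{eq:embedding} starts in $\T$ but leaves it, and let $t^\star$ be the first time the trajectory hits $\partial \T$ (the infimum exists by continuity and closedness of $\T$). Then some index $i$ satisfies $\widehat{x}_i(t^\star) - x_i(t^\star) = 0$ while this quantity is about to become negative, forcing $\tfrac{d}{dt}(\widehat{x}_i - x_i)|_{t=t^\star} < 0$ in a limiting sense; this contradicts the tangential inequality just established. Equivalently, one can use Nagumo's theorem directly on the polyhedral set $\T$, whose tangent cone at a boundary point is given precisely by $\{(v,\widehat{v}) : \widehat{v}_i - v_i \geq 0 \text{ for all active } i\}$.

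The main obstacle is a bookkeeping one rather than a conceptual one: I need to be careful in the first interpolation step because $\partial d_i / \partial x_i$ has no sign constraint in Definition \ref{def2}, and the argument only goes through because the $i$-th coordinate of the first argument is fixed (it equals $x_i = \widehat{x}_i$ on the active face). A secondary subtlety is that multiple constraints $x_{i_1} = \widehat{x}_{i_1},\ldots,x_{i_k} = \widehat{x}_{i_k}$ may be simultaneously active at a boundary point, but the tangential condition must be checked coordinatewise and the four-step argument applies unchanged for each active index, so the polyhedral Nagumo condition is verified on every face of $\T$.
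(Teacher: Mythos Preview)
Your proposal is correct but takes a genuinely different route from the paper. You argue directly via Nagumo's tangential condition on $\T$: at a boundary point where $x_i=\widehat{x}_i$ (and $x\preceq\widehat{x}$), your four-step interpolation through the monotonicity properties of $d$ correctly yields $d_i(\widehat{x},\overline{w},x,\underline{w})\geq d_i(x,\underline{w},\widehat{x},\overline{w})$, so the flow points weakly inward on every face. The paper instead exploits a symmetry of the nondeterministic embedding system $\varepsilon$: if $(x(t),\widehat{x}(t))=\Phi^{\varepsilon}(t;(\underline{x},\overline{x}),(\underline{w},\overline{w}))$, then swapping the state and input pairs shows $(\widehat{x}(t),x(t))=\Phi^{\varepsilon}(t;(\overline{x},\underline{x}),(\overline{w},\underline{w}))$ is also a trajectory; since $(\underline{x},\overline{x})\preceq_{\rm SE}(\overline{x},\underline{x})$ and $(\underline{w},\overline{w})\preceq_{\rm SE}(\overline{w},\underline{w})$, the already-established southeast monotonicity of $\varepsilon$ gives $(x(t),\widehat{x}(t))\preceq_{\rm SE}(\widehat{x}(t),x(t))$, i.e., $x(t)\preceq\widehat{x}(t)$. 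The paper's argument is shorter and avoids any boundary analysis or Nagumo machinery, but it leans on the nondeterministic system and its monotonicity as prerequisites; your argument is more self-contained and makes explicit exactly which sign conditions on $d$ are used where, at the cost of the Lipschitz/Rademacher bookkeeping you note.
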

\begin{proof}
Choose $(\underline{x},\, \overline{x})\in \T$, then $\underline{x} \preceq \overline{x}$. Define 
\begin{equation}
    (x(t),\, \widehat{x}(t))
    = \Phi^e(t;\, (\underline{x},\, \overline{x})),
\end{equation}
where from \eqref{fakeembedding} we now have
\begin{align*}
    (x(t),\, \widehat{x}(t))
    & = \Phi^{\varepsilon}(t;\, (\underline{x},\, \overline{x}),\, (\underline{w},\, \overline{w})), \\
    (\widehat{x}(t),\, x(t))
    & = \Phi^{\varepsilon}(t;\, (\overline{x},\, \underline{x}),\, (\overline{w},\, \underline{w})).
\end{align*}
Since $(\underline{x},\, \overline{x}) \preceq_{\rm SE} (\overline{x},\, \underline{x})$ and $(\underline{w},\, \overline{w}) \preceq_{\rm SE} (\overline{w},\, \underline{w})$ we have $(x(t),\, \widehat{x}(t)) \preceq_{\rm SE} (\widehat{x}(t),\, x(t))$ for all $t \geq 0$.  Equivalently, $x(t) \preceq \widehat{x}(t)$ for all $t \geq 0$, and thus $\Phi^e(t;\, (\underline{x},\, \overline{x})) \in \T$.  Therefore, $\T$ is forward invariant for \eqref{eq:embedding}.
\end{proof}

\begin{lemma}\label{lem2}
The set $\S$ is forward invariant for \eqref{eq:embedding}, and $\Phi^e(t_1; a)\preceq_{\rm SE}\Phi^e(t_2; a)$ for all $a \in \S$ and all $0\leq t_1\leq t_2$.
\end{lemma}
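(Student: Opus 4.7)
I will derive both conclusions from a single intermediate claim: for every $a \in \S$ and every $t \geq 0$ for which the trajectory is defined, $e(\Phi^{e}(t;\,a)) \succeq_{\rm SE} 0$. Granted this, forward invariance of $\S$ is immediate, since $\Phi^{e}(t;\,a) \in \T$ by Lemma \ref{lem1} and the claim supplies the remaining defining inequality of $\S$. For the monotonicity-in-time assertion, I would use the integral identity
\begin{equation*}
\Phi^{e}(t_2;\,a) - \Phi^{e}(t_1;\,a) = \int_{t_1}^{t_2} e(\Phi^{e}(s;\,a))\,ds
\end{equation*}
for $0 \leq t_1 \leq t_2$; since each integrand lies in the closed southeast cone and the cone is closed under nonnegative linear combinations, the integral lies in the cone as well, which is exactly $\Phi^{e}(t_1;\,a) \preceq_{\rm SE} \Phi^{e}(t_2;\,a)$.

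To establish the intermediate claim, I would leverage the already-established monotonicity of \eqref{eq:embedding} with respect to $\preceq_{\rm SE}$. By the semigroup property $\Phi^{e}(t+s;\,a) = \Phi^{e}(t;\,\Phi^{e}(s;\,a))$, differentiating in $s$ at $s=0$ identifies $e(\Phi^{e}(t;\,a))$ with $D\Phi^{e}(t;\,a)\,e(a)$, i.e., the velocity along the trajectory is the image of the initial velocity $e(a)$ under the linearization of the flow in its initial condition. Monotonicity of $\Phi^{e}$ in initial conditions then forces $D\Phi^{e}(t;\,a)$ to map the closed southeast cone into itself: for any $v$ with $0 \preceq_{\rm SE} v$, the inequality $a \preceq_{\rm SE} a + \epsilon v$ propagates (for $\epsilon > 0$ small enough that $a + \epsilon v \in \X \times \X$) to $\Phi^{e}(t;\,a) \preceq_{\rm SE} \Phi^{e}(t;\,a + \epsilon v)$, and dividing by $\epsilon$ and letting $\epsilon \downarrow 0$ using closedness of the cone yields $0 \preceq_{\rm SE} D\Phi^{e}(t;\,a)\,v$. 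Applied to $v = e(a)$, which lies in the southeast cone because $a \in \S$, this produces the intermediate claim.

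The main obstacle is a regularity technicality: the variational identification $e(\Phi^{e}(t;\,a)) = D\Phi^{e}(t;\,a)\,e(a)$ requires the flow to be $C^1$ in its initial condition, which in turn requires $d$ to be $C^1$, whereas the paper's standing hypothesis is only local Lipschitzness of $d$. Under the natural $C^1$ assumption the plan above is complete. Otherwise one can retain only Lipschitz regularity by working directly with the integral form: fix $h > 0$, observe that if $a \preceq_{\rm SE} \Phi^{e}(h;\,a)$ then monotonicity of $\Phi^{e}$ in initial conditions gives $\Phi^{e}(t;\,a) \preceq_{\rm SE} \Phi^{e}(t;\,\Phi^{e}(h;\,a)) = \Phi^{e}(t+h;\,a)$, and reduce to establishing the base inequality $a \preceq_{\rm SE} \Phi^{e}(h;\,a)$ (for all $h \geq 0$) by a standard approximation/smoothing argument on $d$. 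Throughout, forward invariance of $\T$ from Lemma \ref{lem1} keeps the analysis within the upper triangle where the southeast comparisons are meaningful.
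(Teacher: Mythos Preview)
The paper does not give its own argument here; it simply invokes \cite[Ch.~3, Prop.~2.1]{smith2008monotone}, which is exactly the statement that for a type-$K$ (quasi-monotone) vector field, an initial condition with $f(x_0)$ in the ordering cone yields a trajectory that is nondecreasing in $t$. Your proposal is a self-contained reconstruction of that result, and the logic is sound: the variational identity $e(\Phi^e(t;a)) = D\Phi^e(t;a)\,e(a)$ together with cone-preservation of $D\Phi^e(t;a)$ gives the intermediate claim under $C^1$, and both conclusions of the lemma follow from it as you describe.

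One refinement on the Lipschitz fallback: smoothing $d$ by convolution risks destroying the sign conditions in Definition~\ref{def2} that underpin monotonicity of the embedding flow, so that route needs extra care. The standard (and cleaner) maneuver---and the one Smith uses---is to perturb the vector field rather than smooth it: replace $e$ by $e_\epsilon := e + \epsilon p$ with $p$ a fixed vector strictly interior to the southeast cone. This preserves the Kamke condition (the Jacobian is unchanged), and since $e_\epsilon(a) = e(a) + \epsilon p$ is now strictly interior, the first-order expansion $\Phi^{e_\epsilon}(h;a) - a = h\,e_\epsilon(a) + o(h)$ immediately gives $a \preceq_{\rm SE} \Phi^{e_\epsilon}(h;a)$ for all small $h>0$ without any differentiability of $e$ in the state. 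Your semigroup bootstrap then yields time-monotonicity of $\Phi^{e_\epsilon}(\cdot;a)$, and continuous dependence on $\epsilon$ transfers this to $\Phi^e$. This avoids the $C^1$ hypothesis entirely and matches the paper's standing Lipschitz assumption.
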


Lemma \ref{lem2} is a direct result of \cite[Ch. 3, Prop. 2.1]{smith2008monotone}.

We next present our main result and show how forward invariant and attractive regions can be identified via stability analysis for the embedding system \eqref{eq:embedding}.

\begin{theorem}\label{thrm1}
Suppose \eqref{eq1} is mixed-monotone with respect to $d$.  Then for all $a \in \S$ the following hold:
\begin{enumerate}
    \item For all $T \geq 0$, the set $\rect{\Phi^{e}(T;\, a)} \subseteq \X$ is robustly forward invariant for \eqref{eq1}.
    \item The limit $\lim_{t \rightarrow \infty}\Phi^{e}(t;\, a) =: (x_{\rm eq},\widehat{x}_{\rm eq})$ exists and $e(x_{\rm eq},\widehat{x}_{\rm eq}) = 0$, i.e., $(x_{\rm eq},\widehat{x}_{\rm eq})$ is an equilibrium for the deterministic embedding system \eqref{eq:embedding}.
    \item The set $[x_{\rm eq},\, \widehat{x}_{\rm eq}]$ is robustly  forward invariant and attractive from $\rect{a} \subseteq \X$. \qedhere
\end{enumerate}
\end{theorem}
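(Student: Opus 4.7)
The plan is to build the three claims on top of Lemmas \ref{lem1} and \ref{lem2} together with Proposition \ref{prop:p1}, viewing everything through the southeast-monotone flow generated by $\Phi^{e}$.

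Write $\Phi^{e}(t;a) = (\underline{y}(t), \overline{y}(t))$ and note that, since $a\in\S\subseteq\T$, Lemma \ref{lem1} gives $\underline{y}(t)\preceq \overline{y}(t)$ for all $t\ge 0$, while Lemma \ref{lem2} gives $\Phi^{e}(t_1;a)\preceq_{\rm SE}\Phi^{e}(t_2;a)$ whenever $0\le t_1\le t_2$. By \eqref{eq:order_to_box}, this last inequality is equivalent to the nesting $\rect{\Phi^{e}(t_2;a)}\subseteq \rect{\Phi^{e}(t_1;a)}$, and in coordinates it says that $\underline{y}(\cdot)$ is componentwise nondecreasing and $\overline{y}(\cdot)$ is componentwise nonincreasing.

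For Part (1), I fix $T\ge 0$ and any $s\ge 0$ and apply Proposition \ref{prop:p1} to the initial set $\X_0 = \rect{\Phi^{e}(T;a)} = [\underline{y}(T),\overline{y}(T)]$. Using the semigroup property $\Phi^{e}(s;\Phi^{e}(T;a)) = \Phi^{e}(T+s;a)$, Proposition \ref{prop:p1} yields $R^{F}(s;\rect{\Phi^{e}(T;a)})\subseteq \rect{\Phi^{e}(T+s;a)}$, and the nesting above gives $\rect{\Phi^{e}(T+s;a)}\subseteq \rect{\Phi^{e}(T;a)}$. This is exactly robust forward invariance of $\rect{\Phi^{e}(T;a)}$ for \eqref{eq1}.

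For Part (2), the componentwise monotonicity of $\underline{y}(\cdot)$ and $\overline{y}(\cdot)$ combined with the sandwich $\underline{x}\preceq \underline{y}(t)\preceq \overline{y}(t)\preceq \overline{x}$ forces both curves to converge componentwise, so the limit $(x_{\rm eq},\widehat{x}_{\rm eq}):=\lim_{t\to\infty}\Phi^{e}(t;a)$ exists in $\X\times\X$. Continuity of $e$ then identifies this limit as an equilibrium: because $\Phi^{e}(\cdot;a)$ is bounded and $\frac{d}{dt}\Phi^{e}(t;a)=e(\Phi^{e}(t;a))$ has a limit by continuity, the standard monotone-convergence argument for ODEs (a bounded, monotone $C^1$ trajectory with Lipschitz right-hand side must have its derivative tend to zero) yields $e(x_{\rm eq},\widehat{x}_{\rm eq})=0$. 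I expect this step — rigorously excluding the pathology of ``drift without limit'' of the derivative — to be the main obstacle; it is handled cleanly by noting that $\underline{y}_i$ is monotone and bounded, so $\underline{y}_i(t)\to\underline{y}_i^\infty$ and hence any limit point of $\dot{\underline{y}}_i$ is forced to be zero, and similarly for $\overline{y}_i$.

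For Part (3), robust forward invariance of $[x_{\rm eq},\widehat{x}_{\rm eq}]$ follows by applying Part (1) to the equilibrium point $(x_{\rm eq},\widehat{x}_{\rm eq})\in\S$, since $\Phi^{e}(T;(x_{\rm eq},\widehat{x}_{\rm eq}))=(x_{\rm eq},\widehat{x}_{\rm eq})$ for every $T\ge 0$. For attractivity from $\rect{a}=[\underline{x},\overline{x}]$, I take any $x_0\in\rect{a}$ and any piecewise-continuous $\mathbf{w}:[0,\infty)\to\W$, and apply Proposition \ref{prop:p1} with $\X_0=\rect{a}$ to conclude $\Phi^{F}(t;x_0,\mathbf{w})\in \rect{\Phi^{e}(t;a)}$ for all $t\ge 0$. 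Given a relatively open neighborhood $\X_\epsilon\subset\X$ of $[x_{\rm eq},\widehat{x}_{\rm eq}]$, the componentwise convergence $\underline{y}(t)\nearrow x_{\rm eq}$ and $\overline{y}(t)\searrow \widehat{x}_{\rm eq}$ implies that the nested rectangles $\rect{\Phi^{e}(t;a)}$ converge to $[x_{\rm eq},\widehat{x}_{\rm eq}]$ in the Hausdorff sense, so there exists $T>0$ with $\rect{\Phi^{e}(t;a)}\subseteq\X_\epsilon$ for all $t\ge T$. Combining this with the reachable-set inclusion completes the proof of attractivity.
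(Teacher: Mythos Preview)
Your proposal is correct and follows essentially the same route as the paper: both arguments hinge on Lemma~\ref{lem2} (southeast monotonicity of $\Phi^e$ along trajectories from $\S$), Lemma~\ref{lem1} (forward invariance of $\T$), and Proposition~\ref{prop:p1} to transfer the nested rectangles $\rect{\Phi^{e}(t;a)}$ back to reachable sets of \eqref{eq1}. The only cosmetic differences are that in Part~2 you spell out the monotone-bounded convergence argument (the paper cites \cite[Ch.~3, Prop.~2.1]{smith2008monotone}), and in Part~3 you use Hausdorff convergence of the shrinking rectangles to $[x_{\rm eq},\widehat{x}_{\rm eq}]$ directly, whereas the paper first traps $\Phi^{e}(T;a)$ in a small ball in $\X\times\X$ and then invokes Part~1; your variant is arguably cleaner.
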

\begin{proof}
Part 1. Suppose $\S$ is nonempty, and choose $a \in \S$. Then, from Lemma \ref{lem2}, $\Phi^e(t;\, a) \in \S$ for all $t \geq 0$. Choose $T\geq 0$ and let $b=\Phi^e(T; a)\in \S$. Also from Lemma \ref{lem2}, $b\preceq_{\rm SE} \Phi^e(t; b)$ so that $\rect{\Phi^{e}(t;\, b)} \subseteq \rect{b}$ by \eqref{eq:order_to_box} for all $t\geq 0$. 
From Proposition \ref{prop:p1} we have $R^F(t;\, \rect{b}) \subseteq \rect{\Phi^{e}(t;\, b)}$.  Therefore $R^F(t;\, \rect{b}) \subseteq \rect{b}$ for all $t\geq 0$, i.e., 
$\rect{b}$ is robustly forward invariant for \eqref{eq1}.  This completes the proof of the first part since $T\geq 0$ was arbitrary.

Part 2. This result follows from \cite[Ch. 3, Prop. 2.1]{smith2008monotone} applied to the monotone embedding system. In particular, since $a \preceq_{\rm SE} \Phi^{e}(t;\, a)$ for all $t \geq 0$, and $\T$ is forward invariant for \eqref{eq:embedding},  we have
    $\Phi^e(t;\, a) \in \{ (\underline{b},\, \overline{b})\in \X\times \X \;\mid\; \underline{a} \preceq \underline{b} \preceq \overline{b} \preceq \overline{a} \}$
for all $t \geq 0$, where we define $\underline{a},\, \overline{a} \in \X$ by $a = (\underline{a},\, \overline{a})$. Since $\Phi^e(t;\, a)$ is increasing with respect to the southeast order and is bounded,  $\lim_{t \rightarrow \infty}\Phi^{e}(t;\, a) := (x_{\rm eq},\, \widehat{x}_{\rm eq})$ exists and $e(x_{\rm eq},\widehat{x}_{\rm eq}) = 0$.

Part 3. Choose $x \in \rect{a}$ and $\mathbf{w}: [0,\, \infty]\rightarrow \W$. Then
\begin{equation*}
    (\Phi^F(t;\, x,\, \mathbf{w}), \Phi^F(t;\, x,\, \mathbf{w}))
    =
    \Phi^{\varepsilon}(t;\, (x,\, x),\, (\mathbf{w},\, \mathbf{w}) )
\end{equation*}
and
\begin{equation*}
    \Phi^{e}(t;\, a)
    =
    \Phi^{\varepsilon}(t;\, a,\, (\underline{w},\, \overline{w}) )
\end{equation*}
hold for all $t \geq 0$. Since $a \preceq_{\rm SE} (x,\, x)$ and $(\underline{w},\, \overline{w}) ) \preceq_{\rm SE} (\mathbf{w}(t),\, \mathbf{w}(t))$ for all $t \geq 0$, we now have $\Phi^F(t;\, x,\, \mathbf{w}) \in \rect{\Phi^{e}(t;\, a)}$ for all $t \geq 0$.  Choose a relatively open neighborhood $\X_{\epsilon}$ of $[x_{\rm eq},\, \widehat{x}_{\rm eq}]$ and a relatively open ball $B \subset \X\times \X$ such that $(x_{\rm eq},\, \widehat{x}_{\rm eq}) \in B \subset \X_{\epsilon} \times \X_{\epsilon}$.  From Part 2, there must exist a $T \geq 0$ such that $\Phi^{e}(T;\, a) \in B$ and at this time $\Phi^F(T;\, x,\, \mathbf{w}) \in \X_{\epsilon}$.
From Part 1 we have that $\rect{\Phi^{e}(T;\, a)}$ is robustly forward invariant for \eqref{eq1} and therefore $\Phi^F(t;\, x,\, \mathbf{w}) \in \X_{\epsilon}$ for all $t \geq T$.
Therefore, $[x_{\rm eq},\, \widehat{x}_{\rm eq}]$ is attractive on \eqref{eq1} from $\rect{a}$. The fact that $[x_{\rm eq},\, \widehat{x}_{\rm eq}]$ is robustly forward invariant follows immediately from Part 1.  
\end{proof}

Theorem \ref{thrm1} provides a basic algorithm for identifying invariant sets; if \eqref{eq:embedding} has an \emph{equilibrium}, i.e. if there exists an $(x_{\rm eq},\, \widehat{x}_{\rm eq})\in \T$ such that $e(x_{\rm eq},\, \widehat{x}_{\rm eq})=0,$ then $[x_{\rm eq},\, \widehat{x}_{\rm eq}]$ is robustly forward invariant for \eqref{eq1}.
Computing equilibria for \eqref{eq:embedding} requires solving a system of $2n$ nonlinear equations and, therefore, is generally computationally tractable.
Moreover, if a point $a \in \S$ is known, then one can simulate the embedding dynamics forward in time, starting from $a$, in order to find an equilibria; see Theorem \ref{thrm1} Part 2.

In the following two corollaries, we show how globally attractive regions for \eqref{eq1} can be identified via stability analysis in the embedding space.

\begin{corollary}\label{cor}
Suppose \eqref{eq1} is mixed-monotone with respect to $d$. If $(x_{\rm eq},\, \widehat{x}_{\rm eq}) \in \T$ is an asymptotically stable equilibrium for \eqref{eq:embedding} with a basin of attraction $\C \subseteq \X\times\X$, then $[x_{\rm eq},\, \widehat{x}_{\rm eq}]$ is robustly forward invariant for \eqref{eq1} and attractive from all $\rect{a}$ such that $a \in \C \cap \T$. In particular, if $\T \subseteq \C$, then $[x_{\rm eq},\, \widehat{x}_{\rm eq}]$ is globally attractive and robustly forward invariant for \eqref{eq1}.
\end{corollary}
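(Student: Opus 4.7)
The plan is to combine the robust forward invariance piece of Theorem \ref{thrm1} with the asymptotic stability hypothesis on $(x_{\rm eq},\widehat{x}_{\rm eq})$. First I would observe that since $(x_{\rm eq},\widehat{x}_{\rm eq}) \in \T$ is an equilibrium of \eqref{eq:embedding}, we have $e(x_{\rm eq},\widehat{x}_{\rm eq}) = 0$ and trivially $0 \preceq_{\rm SE} 0$, so $(x_{\rm eq},\widehat{x}_{\rm eq}) \in \S$. Applying Part 1 of Theorem \ref{thrm1} with $T=0$ immediately yields $[x_{\rm eq},\widehat{x}_{\rm eq}] = \rect{(x_{\rm eq},\widehat{x}_{\rm eq})}$ is robustly forward invariant for \eqref{eq1}.

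For attractivity from $\rect{a}$ with $a \in \C \cap \T$, I would recycle the bounding argument used in the proof of Theorem \ref{thrm1} Part 3. Fix any $x_0 \in \rect{a}$ and any piecewise continuous $\mathbf{w}:[0,\infty)\to\W$. Since $a \preceq_{\rm SE} (x_0,x_0)$ and $(\underline{w},\overline{w}) \preceq_{\rm SE} (\mathbf{w}(t),\mathbf{w}(t))$ for all $t\geq 0$, monotonicity of the nondeterministic embedding \eqref{fakeembedding} in the southeast order yields
\begin{equation*}
\Phi^{e}(t;a) \preceq_{\rm SE} \bigl(\Phi^{F}(t;x_0,\mathbf{w}),\,\Phi^{F}(t;x_0,\mathbf{w})\bigr),
\end{equation*}
equivalently $\Phi^{F}(t;x_0,\mathbf{w}) \in \rect{\Phi^{e}(t;a)}$ for all $t \geq 0$.

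The remaining task is to show the hyperrectangle $\rect{\Phi^{e}(t;a)}$ eventually lies in any prescribed relatively open neighborhood $\X_\epsilon \subset \X$ of $[x_{\rm eq},\widehat{x}_{\rm eq}]$. I would choose an open ball $B \subset \X\times\X$ around $(x_{\rm eq},\widehat{x}_{\rm eq})$ with $B \subseteq \X_\epsilon \times \X_\epsilon$, which is possible because $\{x_{\rm eq},\widehat{x}_{\rm eq}\} \subset [x_{\rm eq},\widehat{x}_{\rm eq}] \subset \X_\epsilon$. By asymptotic stability applied to $a \in \C$, there exists $T \geq 0$ with $\Phi^{e}(t;a) \in B$, hence $\rect{\Phi^{e}(t;a)} \subset \X_\epsilon$, for all $t \geq T$. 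Thus $\Phi^{F}(t;x_0,\mathbf{w}) \in \X_\epsilon$ for all $t \geq T$, giving attractivity from $\rect{a}$. The global statement then follows by noting that when $\T \subseteq \C$, any $x_0 \in \X$ lies in $\rect{a}$ for $a=(x_0,x_0) \in \Delta \subset \T \subseteq \C$, so the previous argument applies to every initial state.

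I do not expect a real obstacle here. The key conceptual point — and the one thing worth emphasizing in the write-up — is that asymptotic stability of the equilibrium substitutes for the monotone convergence mechanism driving Part 2 of Theorem \ref{thrm1}, so the reasoning extends from $a \in \S$ to arbitrary $a \in \C \cap \T$. Everything else is a direct reuse of the order-preservation inequality between $\Phi^{e}$ and $\Phi^{F}$ that was already established.
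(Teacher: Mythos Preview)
Your proposal is correct and matches the paper's approach: robust forward invariance via Theorem~\ref{thrm1} Part~1, and attractivity by rerunning the Part~3 bounding argument with the asymptotic stability hypothesis replacing the role of Part~2. Your version is slightly cleaner in that you use convergence of $\Phi^e(t;a)$ to get $\Phi^e(t;a)\in B$ for all $t\ge T$ directly, whereas the paper only gets $\Phi^e(T;a)\in B$ at one time and then appeals to Part~1's forward invariance to extend; both work and the difference is cosmetic.
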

\begin{proof}
Robust forward invariance of $[x_{\rm eq},\, \widehat{x}_{\rm eq}]$ follows immediately from Theorem \ref{thrm1} Part 1. Attractivity of $[x_{\rm eq},\, \widehat{x}_{\rm eq}]$ follows by a slight modification of the proof of Theorem \ref{thrm1} Part 3, where we observe that Part 2 of the theorem is invoked to establish that $\Phi^e(T; a)\in B$ for some $T\geq 0$, but this now holds
for all $a \in \C$.  Thus $\lim_{t \rightarrow \infty}\Phi^{e}(t;\, a) = (x_{\rm eq},\, \widehat{x}_{\rm eq})\in B$.  
\qedhere
\end{proof}

It is instructive to consider the specialization of Theorem \ref{thrm1} to monotone systems.
\begin{corollary}\label{cor3}
Suppose \eqref{eq1} is monotone, i.e., satisfies the conditions of Remark \ref{rem1}. If $x_{\rm eq}\in \X$ is globally asymptotically stable for $\dot{x} = F(x,\, \underline{w})$ and $\widehat{x}_{\rm eq} \in \X$ is globally asymptotically stable for $\dot{x} = F(x,\, \overline{w})$, then $\X_{\rm eq} = [x_{\rm eq},\, \widehat{x}_{\rm eq}]$ is robustly forward invariant and globally attractive for \eqref{eq1}.  Additionally, no hyperrectangle that is a proper subset of $\X_{\rm eq}$ is robustly forward invariant for \eqref{eq1}.
\end{corollary}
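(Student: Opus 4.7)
The plan is to exploit the fact that, for monotone systems in the sense of Remark~\ref{rem1}, the decomposition function reduces to $d(x,w,\widehat{x},\widehat{w})=F(x,w)$, so the deterministic embedding system \eqref{eq:embedding} decouples into two independent copies of the original dynamics driven by the extremal disturbances:
\begin{equation*}
\dot{x}=F(x,\underline{w}), \qquad \dot{\widehat{x}}=F(\widehat{x},\overline{w}).
\end{equation*}
Thus $x_{\rm eq}$ and $\widehat{x}_{\rm eq}$ are precisely the equilibria of the two uncoupled component systems, and the entire corollary should follow by applying Corollary~\ref{cor} to this decoupled embedding.

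First I would verify that $x_{\rm eq}\preceq \widehat{x}_{\rm eq}$, so that $\X_{\rm eq}=[x_{\rm eq},\widehat{x}_{\rm eq}]$ is a well-defined hyperrectangle and $(x_{\rm eq},\widehat{x}_{\rm eq})\in \T$. Fixing any $x_0\in\X$ and writing $\phi_1(t):=\Phi^F(t;x_0,\underline{w})$ and $\phi_2(t):=\Phi^F(t;x_0,\overline{w})$, monotonicity of $F$ in $w$ gives $F(x,\underline{w})\preceq F(x,\overline{w})$ for all $x$, and then the monotonicity of the state dynamics yields the comparison $\phi_1(t)\preceq \phi_2(t)$ for all $t\geq 0$; passing to the limit in $t$ gives $x_{\rm eq}\preceq \widehat{x}_{\rm eq}$. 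Because the embedding system decouples and each factor is globally asymptotically stable (at $x_{\rm eq}$ and $\widehat{x}_{\rm eq}$ respectively), the point $(x_{\rm eq},\widehat{x}_{\rm eq})\in\T$ is a globally asymptotically stable equilibrium of \eqref{eq:embedding} with basin of attraction $\C=\X\times\X\supseteq \T$. Corollary~\ref{cor} then directly delivers the robust forward invariance and global attractivity of $\X_{\rm eq}$ for \eqref{eq1}.

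For the minimality assertion I would argue by contradiction. Suppose $[y,\widehat{y}]$ is a hyperrectangle properly contained in $\X_{\rm eq}$ and is robustly forward invariant for \eqref{eq1}. Then at least one coordinate is strictly tightened: either $y_i>x_{{\rm eq},i}$ for some $i$, or $\widehat{y}_i<\widehat{x}_{{\rm eq},i}$ for some $i$. In the first case, feed the trajectory starting at $y$ the constant disturbance $\mathbf{w}(\cdot)\equiv \underline{w}$; since $x_{\rm eq}$ is globally asymptotically stable for $\dot{x}=F(x,\underline{w})$, the trajectory converges to $x_{\rm eq}$ and its $i$-th coordinate must eventually fall strictly below $y_i$, contradicting invariance of $[y,\widehat{y}]$. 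The second case is handled symmetrically using $\mathbf{w}(\cdot)\equiv \overline{w}$ and the globally asymptotically stable equilibrium $\widehat{x}_{\rm eq}$.

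The technical content is light: the core observation is the decoupling of the embedding under $d(x,w,\widehat{x},\widehat{w})=F(x,w)$, after which everything reduces to a one-line application of Corollary~\ref{cor}. The only real care needed is in the comparison argument establishing $x_{\rm eq}\preceq \widehat{x}_{\rm eq}$ (standard monotone systems theory) and in noting that under the decoupling, global asymptotic stability of each component lifts to global asymptotic stability of $(x_{\rm eq},\widehat{x}_{\rm eq})$ in the southeast order.
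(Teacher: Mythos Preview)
Your proposal is correct and follows essentially the same route as the paper: observe that monotonicity gives $d(x,w,\widehat{x},\widehat{w})=F(x,w)$, so the embedding system decouples into $\dot{x}=F(x,\underline{w})$ and $\dot{\widehat{x}}=F(\widehat{x},\overline{w})$, whence $(x_{\rm eq},\widehat{x}_{\rm eq})$ is globally asymptotically stable for \eqref{eq:embedding} and Corollary~\ref{cor} applies; the minimality assertion is handled identically by noting that the constant disturbances $\underline{w}$ and $\overline{w}$ drive trajectories to $x_{\rm eq}$ and $\widehat{x}_{\rm eq}$ respectively. Your explicit verification that $x_{\rm eq}\preceq\widehat{x}_{\rm eq}$ is a detail the paper leaves implicit, so if anything you are slightly more careful.
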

\begin{proof}
If \eqref{eq1} is monotone then $d(x,\, w,\, \widehat{x},\, \widehat{w}) = F(x,\, w)$ is a decomposition function for \eqref{eq1}; see Remark \ref{rem1}. 
Thus if $x_{\rm eq}\in \X$ is globally asymptotically stable for $\dot{x} = F(x,\, \underline{w})$ and $\widehat{x}_{\rm eq} \in \X$ is 
globally asymptotically stable for $\dot{x} = F(x,\, \underline{w})$, then
$(x_{\rm eq},\, \widehat{x}_{\rm eq})$ globally asymptotically stable for \eqref{eq:embedding}, and from Corollary \ref{cor} we have that $\X_{\rm eq}$ is globally attractive on \eqref{eq1}.  
Moreover, no proper hyper-rectangular subset of $\X_{\rm eq}$ can be robustly forward invariant on \eqref{eq1} as there exist trajectories of \eqref{eq1} that begin in $\X_{\rm eq}$ and reach $x_{\rm eq}$ (and $\widehat{x}_{\rm eq}$). \qedhere
\end{proof}

We demonstrate the applicability of Theorem \ref{thrm1} for computing forward invariant regions in the following example.

\begin{example}\label{Example2}
Consider the system
\begin{equation} \label{eq8}
\begin{bmatrix}
\dot{x}_1 \\ \dot{x}_2
\end{bmatrix}
=
F(x,\, w)
=
\begin{bmatrix}
-x_1 - x_1^3 - x_2 - w\\ -x_2 - x_2^3 + x_1 +w^3
\end{bmatrix}
\end{equation}
with $\X = \R^2$ and $\W = [-2,\, 2]$.  The system \eqref{eq8} is mixed-monotone with decomposition function
\begin{equation}
d(x,\, w,\, \widehat{x},\, \widehat{w})
=
\begin{bmatrix}
-x_1 - x_1^3 - \widehat{x}_2 - \widehat{w}\\ -x_2 - x_2^3 + x_1 +w^3
\end{bmatrix}.
\end{equation}
Additionally,
$ e(x_{\rm eq},\, \widehat{x}_{\rm eq}) = 0$
for 
\begin{equation}\label{eq25}
x_{\rm eq} = (-1.37,\, -1.95), \qquad
\widehat{x}_{\rm eq} = (1.37,\, 1.95),
\end{equation}
and $(x_{\rm eq},\,\widehat{x}_{\rm eq})\in \T$.  Therefore, from Theorem \ref{thrm1}, $\X_{\rm eq} = [x_{\rm eq},\, \widehat{x}_{\rm eq}]$ is robustly forward invariant for \eqref{eq8}.
Additionally, it can be checked that $(x_{\rm eq},\, \widehat{x}_{\rm eq})$ is globally asymptotically stable for \eqref{eq:embedding}; evoking Corollary \ref{cor}, we now have that $\X_{\rm eq}$ is globally attractive for \eqref{eq8}. We show $\X_{\rm eq}$ graphically in Figure \ref{fig2}.
\end{example}

\begin{figure}
    \hspace{.1in}
    \input{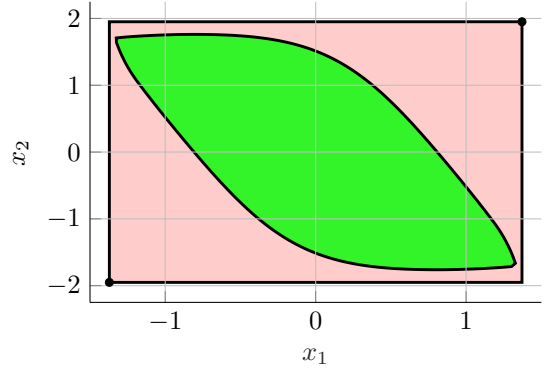}
    \caption{Computing attractive sets of \eqref{eq8} on $\X = \R^2$.
    $\X_{\rm eq}$ from \eqref{eq25} is shown in red.
    The region shown in green is globally attractive for \eqref{eq8}, and no proper subset of this region is robustly forward invariant.
    }
    \label{fig2}
\end{figure}

\section{Backward-Time Reachability for Mixed-Monotone Systems and Invariance}

In this section, we present a result analogous to Proposition \ref{prop:p1} for over-approximating finite-time backward reachable sets.  Later in the section, we leverage this result for the computation of robustly forward invariant regions for \eqref{eq1}.

The system \eqref{eq1} induces the backward-time dynamics
\begin{equation}\label{eq5}
\dot{x} = G(x,\, w) := - F(x,\, w)
\end{equation}
with $x \in \X$ and $w \in \W$, and \eqref{eq1} and \eqref{eq5} are related in the following way: if $x_1 = \Phi^F(T;\, x_0,\, \mathbf{w})$ for $\mathbf{w}:[0,\, T] \rightarrow \W$, then $x_0 = \Phi^G(T;\, x_1,\, \mathbf{w}')$ for $\mathbf{w}'(t) = \mathbf{w}(T - t)$, where $\Phi^G$ denotes the state transition function of \eqref{eq5}.
Let
\begin{multline}\label{micheal}
    S^{F}(T;\, \X_1) :=
    \Big{\{}
	x_0 \in \X
	\,\Big{|}\,
     \Phi^{F}(T;\, x_0,\, \mathbf{w}) \in \X_1\\
        \text{ for some }\mathbf{w} : [0,\, T] \rightarrow \W\Big{\}}
\end{multline}  
denote the set of initial conditions for which there exists a $\mathbf{w} :[0,\, T] \rightarrow \W$ capable of driving \eqref{eq1} to the set $\X_1$ in time $T\geq 0$.  Note that $R^G(T;\, \X_1) = S^F(T; \X_1)$ where $R^G(T;\, \X_1)$ is given by \eqref{eq:reachable}.
We next show that if \eqref{eq5} is mixed-monotone, then $S^F$ can be approximated using a procedure similar to that presented in Proposition \ref{prop:p1}.

\begin{proposition}\label{prop:p3}
Let \eqref{eq5} be mixed-monotone with respect to $D$, and choose $\X_1 = [\underline{x},\, \overline{x}]$. 
Construct the deterministic embedding system
\begin{equation}
\begin{bmatrix}
  \dot{x}\\
  \dot{ \widehat{x}} 
\end{bmatrix}
  = E(x,\, \widehat{x})
  := 
\begin{bmatrix}
  D(x,\, \underline{w},\,  \widehat{x},\,\overline{w})\\
  D( \widehat{x},\,\overline{w},\, x,\, \underline{w}) 
\end{bmatrix}
\end{equation}
with state transition function $\Phi^E$. 
If $\Phi^{E}( t;\, (\underline{x}, \overline{x}))\in \X\times \X$ for all $0\leq t\leq T$, then $S^F(T;\, \X_1) \subseteq [\Phi^{E}( T;\, (\underline{x},\, \overline{x})) ].$
\end{proposition}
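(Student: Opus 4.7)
The plan is to reduce Proposition \ref{prop:p3} directly to Proposition \ref{prop:p1} applied to the backward-time system \eqref{eq5}. First I would establish the key identity $S^F(T;\, \X_1) = R^G(T;\, \X_1)$, which follows from the time-reversal correspondence between trajectories of $F$ and $G$ that is already noted in the text: if $\mathbf{w} : [0,\, T] \to \W$ drives $x_0$ to some $x_1 \in \X_1$ under \eqref{eq1}, then the reversed input $\mathbf{w}'(t) := \mathbf{w}(T - t)$ drives $x_1$ to $x_0$ under \eqref{eq5}, and conversely. Both $\mathbf{w}$ and $\mathbf{w}'$ are piecewise continuous maps into the same disturbance set $\W$, so the two defining conditions of $S^F(T;\, \X_1)$ and $R^G(T;\, \X_1)$ pick out exactly the same set of initial states $x_0 \in \X$.

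With this identity in hand, the rest is essentially mechanical. Since \eqref{eq5} is mixed-monotone with respect to $D$ by hypothesis, the deterministic embedding system associated with $D$ in the sense of \eqref{eq:embedding} is exactly the system with vector field $E$ stated in the proposition, and its state transition function is $\Phi^E$. Proposition \ref{prop:p1} then applies verbatim to the pair $(G,\, D)$: provided $\Phi^E(t;\, (\underline{x},\, \overline{x})) \in \X \times \X$ for all $0 \leq t \leq T$, one obtains
\begin{equation*}
R^G(T;\, \X_1) \subseteq \rect{\Phi^E(T;\, (\underline{x},\, \overline{x}))}.
\end{equation*}
Combining this inclusion with the identity $S^F(T;\, \X_1) = R^G(T;\, \X_1)$ yields the desired conclusion.

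I do not expect a substantive obstacle here; the proposition is really a corollary of Proposition \ref{prop:p1} together with the forward/backward trajectory correspondence. The only point that warrants a line of care is verifying that the class of admissible disturbance signals is preserved under the time reversal $t \mapsto T - t$, i.e., that $\mathbf{w}'$ inherits piecewise continuity and range $\W$ from $\mathbf{w}$; this is immediate, since $\W$ is fixed and piecewise continuity is preserved by an affine reparametrization of time. Everything else is just quoting Proposition \ref{prop:p1} applied to a different vector field.
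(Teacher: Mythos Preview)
Your proposal is correct and follows essentially the same approach as the paper: apply Proposition \ref{prop:p1} to the backward-time system $G$ with decomposition $D$ to obtain $R^G(T;\, \X_1) \subseteq \rect{\Phi^E(T;\, (\underline{x},\, \overline{x}))}$, and combine this with the identity $S^F(T;\, \X_1) = R^G(T;\, \X_1)$ coming from the time-reversal correspondence. Your additional remark about piecewise continuity being preserved under $t \mapsto T - t$ is a fair point of care but is indeed immediate.
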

\begin{proof}
    From Proposition \ref{prop:p1} we have that $R^G(T;\, \X_1) \subseteq [ \Phi^{E}( T;\, (\underline{x},\, \overline{x})) ],$ and $R^G(T;\, \X_1) = S^F(T; \X_1)$.
    Therefore $S^F(T;\, \X_1) \subseteq [\Phi^{E}( T;\, (\underline{x},\, \overline{x})) ]$.
\end{proof}

We next provide a special case for when the backward-time decomposition function is easily constructed from a (forward-time) decomposition function.

\begin{specialblock}
If
\begin{enumerate}
\item \eqref{eq1} is mixed-monotone with respect to $d$, and
\item for all $i\in\{1,\,\cdots,\, n\}$, we have $\frac{\partial d_i}{\partial x_i}(x,\, w,\, \widehat{x},\, \widehat{w}) \geq 0$ for  all $x,\,  \widehat{x} \in \X$ and for  all $w,\, \widehat{w} \in \W$ whenever the derivative exists,
\end{enumerate}
then \eqref{eq5} is mixed-monotone with decomposition function 
$D(x,\, w,\, \widehat{x},\, \widehat{w}) = -d (\widehat{x},\, \widehat{w},\, x,\, w)$.
\end{specialblock}

We demonstrate the bounding procedure from Proposition \ref{prop:p3} in the following example.

\begin{example}\label{Example3}
Consider the system
\begin{equation}\label{eq6}
\begin{bmatrix}
\dot{x}_1\\
\dot{x}_2
\end{bmatrix}
=
F(x,\, w)
=
\begin{bmatrix}
x_1 x_2 + w\\
x_1 + 1
\end{bmatrix}
\end{equation}
with $\X = \R^2$ and $\W = [0,\, 1/4]$.  The backward-time dynamics $\dot{x} = -F(x,\, w)$ for \eqref{eq6} are mixed-monotone with decomposition function 
\begin{equation}
\begin{split}
D_1(x,\, w,\, \widehat{x},\, \widehat{w}) & =
\begin{cases}
-x_1 \widehat{x}_2 - \widehat{w} & \text{if $x_1\geq 0$}, \\
-x_1 x_2 - \widehat{w} & \text{if $x_1< 0$},
\end{cases} 
\\
D_2(x,\, w,\, \widehat{x},\, \widehat{w}) & = -\widehat{x}_1 -1.
\end{split}
\end{equation}
Figure \ref{fig3} illustrates how finite-time backward reachable sets of \eqref{eq6} are approximated using Proposition \ref{prop:p3}.
\end{example}

\begin{figure}[t!]
    \hspace{.1in}
    \input{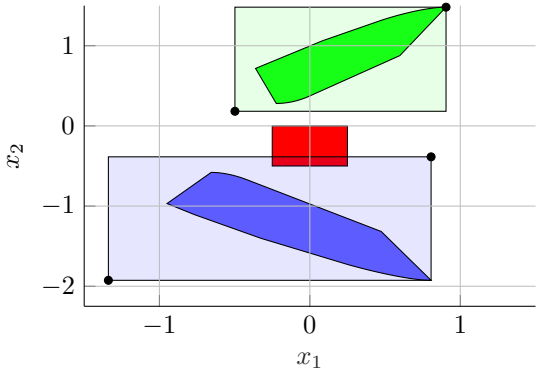}
    \caption{
	Approximating finite-time backward reachable sets of \eqref{eq6} with the set of initial conditions $\X_1 = [-1/4,\,1/4]\times[-1/2,\, 0]$. 
	$\X_1$ is shown in red.  
	$S^F(1;\, \X_1)$ is shown in blue, with a hyper-rectangular over approximation shown in light blue. 
	$R^F(1;\, \X_1)$ is shown in green, with a hyper-rectangular over approximation shown in light green.
	}
    \label{fig3}
\end{figure}

We next extend Theorem \ref{thrm1} to leverage the backward time dynamics \eqref{eq5}.  Specifically, we show that if \eqref{eq5} is mixed-monotone, as was the case in Proposition \ref{prop:p3}, then a robustly forward invariant region for \eqref{eq1} can be computed using an analogous technique to that of Theorem \ref{thrm1}.

\begin{theorem}\label{theorem2}
Let \eqref{eq5} be mixed-monotone with respect to $D$.
If there exists $(\underline{x},\, \overline{x}) \in \T$ such that $0  \preceq_{\rm SE} E(\underline{x},\, \overline{x})$ then $\X \setminus [\underline{x},\, \overline{x}]$ is robustly forward invariant for \eqref{eq1}.
\end{theorem}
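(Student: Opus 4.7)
The plan is to leverage Theorem \ref{thrm1} applied to the backward-time system and then translate forward invariance of the hyperrectangle $[\underline{x},\,\overline{x}]$ for \eqref{eq5} into forward invariance of its complement for \eqref{eq1}.

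First, observe that by hypothesis $(\underline{x},\,\overline{x})\in \T$ and $0\preceq_{\rm SE} E(\underline{x},\,\overline{x})$, so $(\underline{x},\,\overline{x})$ lies in the analog of the set $\S$ for the backward-time embedding system defined in Proposition \ref{prop:p3}. Since \eqref{eq5} is mixed-monotone with respect to $D$, Theorem \ref{thrm1} Part 1 (taking $T=0$) applied to the backward-time dynamics yields that $\rect{(\underline{x},\,\overline{x})} = [\underline{x},\,\overline{x}]$ is robustly forward invariant for \eqref{eq5}. Explicitly, $\Phi^G(t;\, x_1,\, \mathbf{w}')\in [\underline{x},\,\overline{x}]$ for every $x_1\in [\underline{x},\,\overline{x}]$, every $t\geq 0$, and every piecewise continuous $\mathbf{w}':[0,\,t]\to \W$ whenever the solution exists.

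Next I would argue by contradiction that $\X\setminus [\underline{x},\,\overline{x}]$ is robustly forward invariant for \eqref{eq1}. Suppose otherwise: then there exist $x_0\in \X\setminus [\underline{x},\,\overline{x}]$, some $T\geq 0$, and some piecewise continuous $\mathbf{w}:[0,\,T]\to \W$ such that $x_1 := \Phi^F(T;\, x_0,\, \mathbf{w})\in [\underline{x},\,\overline{x}]$. Using the correspondence between \eqref{eq1} and \eqref{eq5} recalled in the paragraph following \eqref{eq5}, the time-reversed input $\mathbf{w}'(t) := \mathbf{w}(T - t)$ satisfies $x_0 = \Phi^G(T;\, x_1,\, \mathbf{w}')$. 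Since $x_1\in [\underline{x},\,\overline{x}]$ and $[\underline{x},\,\overline{x}]$ is robustly forward invariant for \eqref{eq5} (from the previous step), we conclude $x_0 = \Phi^G(T;\, x_1,\, \mathbf{w}')\in [\underline{x},\,\overline{x}]$, contradicting $x_0\in \X\setminus [\underline{x},\,\overline{x}]$. This contradiction establishes the claim.

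I do not expect a genuine obstacle here; the two conceptual ingredients (applying Theorem \ref{thrm1} to the backward dynamics and using the forward/backward trajectory correspondence) are both already developed in the paper, and the only care needed is to argue robustness over \emph{all} disturbance signals, which is handled cleanly because the time-reversal $\mathbf{w}'(t)=\mathbf{w}(T-t)$ is again a piecewise continuous map into $\W$. A minor technical point worth being explicit about is the implicit existence assumption on $\Phi^F$ and $\Phi^G$ from Section~III: since the argument only invokes the invariance conclusion at times where the relevant trajectories exist, the statement is consistent with Definition \ref{def1} and no additional completeness hypothesis is required.
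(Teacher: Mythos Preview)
Your proposal is correct and follows essentially the same approach as the paper: apply Theorem~\ref{thrm1} to the backward-time dynamics to obtain robust forward invariance of $[\underline{x},\,\overline{x}]$ for \eqref{eq5}, and then conclude invariance of the complement for \eqref{eq1}. The paper's proof states the second implication without justification, whereas you spell it out via the time-reversal correspondence and a contradiction argument; this added detail is sound and is exactly what the paper's terse ``Thus'' is relying on.
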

\begin{proof}
If $0  \preceq_{\rm SE} E(\underline{x},\, \overline{x})$ holds for $(\underline{x},\, \overline{x}) \in \T$ then from Theorem \ref{thrm1} we have that $[\underline{x},\, \overline{x}]$ is robustly forward invariant for \eqref{eq5}.  Thus, $\X\setminus[\underline{x}, \overline{x}]$ is robustly forward invariant for \eqref{eq1}.
\qedhere
\end{proof}

\section{Case Study}
In this section, we present a numerical example to demonstrate the applicability of Theorems \ref{thrm1} and \ref{theorem2}.

Consider the system  
\begin{equation}\label{eq11}
\begin{bmatrix}
\dot{x}_1 \\
\dot{x}_2
\end{bmatrix}
= F(x,\, w) =
\begin{bmatrix}
-x_2 + x_1 (4 - 4x_1^2 - x_2^2) + w_1
\\
\:\:\; x_1 + x_2 (4 - x_1^2 - 4x_2^2) + w_2
\end{bmatrix}
\end{equation}
with $\X = \R^2$ and $\W = [-3/4,\, 3/4] \times [-3/4,\, 3/4]$.  The system \eqref{eq11} is mixed-monotone with respect to
\begin{equation*}
d(x,w,\widehat{x},\widehat{w}) = 
\begin{bmatrix}
-\widehat{x}_2 + 4x_1 - 4x_1^3 - l(x_1, x_2, \widehat{x}_2) + w_1
\\
\:\:\; x_1 + 4x_2 - 4x_2^3 - l(x_2, x_1, \widehat{x}_1) + w_2
\end{bmatrix}
\end{equation*}
\begin{equation*}
l(a,\, b,\, c) := 
\begin{cases}
a b^2 & 
\text{if } a \geq 0 \geq b \:\text{ and }\:    b \leq - c, \\
& \text{or } b \geq 0 > a \:\text{ and }\:  b \geq -c, \\
\vspace{-.3cm} \\
a c^2 
& \text{if } a,\, c \geq 0 \:\text{ and }\: b > -c, \\
& \text{or } a < 0 \:\text{ and }\: c \leq 0 \:\text{ and }\:  b < -c, \\
\vspace{-.3cm} \\
abc & 
\text{if } a \geq 0 \:\text{ and }\: b > 0 > c,\\
& \text{or } c > 0 > a,\,b .
\end{cases}
\end{equation*}
Additionally, we have $e(x_{\rm eq},\, \widehat{x}_{\rm eq}) = 0$ for 
\begin{equation}\label{eq12}
x_{\rm eq} = (-1.36,\, -1.36),\qquad
\widehat{x}_{\rm eq} = (1.36,\, 1.36).
\end{equation}
Therefore, from Theorem \ref{thrm1}, we have that $\X_{\rm eq} := [x_{\rm eq},\, \widehat{x}_{\rm eq}]$ is robustly forward invariant for \eqref{eq11}.
Additionally, $(x_{\rm eq},\, \widehat{x}_{\rm eq})$ is asymptotically stable on \eqref{eq:embedding} with a basin of attraction containing $\T$.  Therefore, $\X_{\rm eq}$ is globally attractive for \eqref{eq11}.

The backward-time dynamics $\dot{x} = -F(x,\, w)$ for \eqref{eq11} are mixed-monotone with decomposition function given by
\begin{multline*}
    D(x, w, \widehat{x}, \widehat{w}) = \\
\begin{bmatrix}
\:\:\; x_2 - 4x_1 + 4x_1^3 - l(-x_1, x_2, \widehat{x}_2) - \widehat{w}_1
\\
-\widehat{x}_1 - 4x_2 + 4x_2^3 - l(-x_2, x_1, \widehat{x}_1) - \widehat{w}_2
\end{bmatrix},
\end{multline*}
and we have $E(y_{\rm eq}, \widehat{y}_{\rm eq}) = 0$ for 
\begin{equation}\label{eq13}
y_{\rm eq} = (-0.59,\, -0.59),\qquad
\widehat{y}_{\rm eq} = (0.59,\, 0.59).
\end{equation}
Therefore, from Theorem \ref{theorem2}, we have that $\X \setminus \Y_{\rm eq}$ is robustly forward invariant for \eqref{eq11}, where  $\Y_{\rm eq} := [y_{\rm eq},\, \widehat{y}_{\rm eq}]$.
We show $\X_{\rm eq}$ and $\Y_{\rm eq}$ graphically in Figure \ref{fig4}.

\begin{figure}
    \hspace{.1in}
    \input{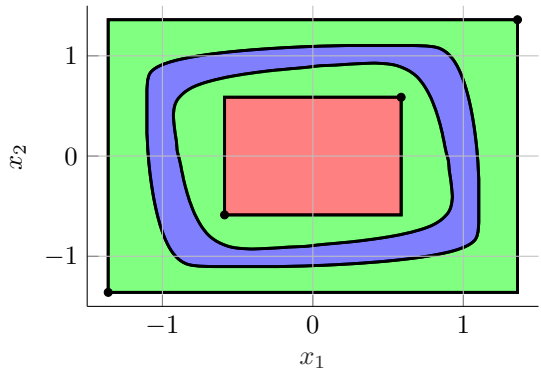}
    \caption{Computing robustly forward invariant sets for \eqref{eq11} by applying Theorems \ref{thrm1} and \ref{theorem2}.  $\X_{\rm eq}$ is the larger rectangle and $\Y_{\rm eq}$ is the smaller rectangle. The region shown in blue is the smallest attractive set computed numerically.}
    \label{fig4}
\end{figure}

\section{Conclusion}
This work presents several new reachability analysis tools for continuous-time mixed-monotone systems subject to a disturbance input.
The specific contributions of this paper are that 
(a) we suggest a new algorithm for computing decomposition functions for polynomial systems,
(b) we present an efficient method for explicitly computing robustly forward invariant sets for mixed-monotone systems, and
(c) we present a method for over-approximating finite-time backward reachable sets for mixed-monotone systems.

\bibliography{Bibliography.bib}
\bibliographystyle{ieeetr}
\end{document}